\definecolor{modcol}{RGB}{255,102,178}
\newcommand{\rev}[1]{\textcolor{black}{#1}}
\definecolor{MFabg}{RGB}{76, 191, 229}
\renewcommand{\emptyset}{\varnothing} 
\newcommand{\Rset}[1]{\mathbb{R}^{#1}} 
\def \p {\mathbf{p}}
\def \uu {\mathbf{u}}
\def \x {\mathbf{x}}
\def \H {\mathbf{H}}
\def \L {\mathbf{L}}
\def\Amc{\mathcal{A}}
\def\Cmc{\mathcal{C}}
\def\Emc{\mathcal{E}}
\def\Gmc{\mathcal{G}}
\def\Nmc{\mathcal{N}}
\def\Rmc{\mathcal{R}}
\def\Smc{\mathcal{S}}
\def\Tmc{\mathcal{T}}
\def\Umc{\mathcal{U}}
\def\Vmc{\mathcal{V}}
\def\Wmc{\mathcal{W}}
\newcommand{\Bomega}{\boldsymbol{\omega}}
\newcommand{\BDelta}{\boldsymbol{\Delta}}
\newtheorem{defn}{Definition}[section]
\newtheorem{asm}{Assumption}[section]
\newtheorem{thm}{Theorem}[section]
\newtheorem{lem}{Lemma}[section]
\newtheorem{prop}{Proposition}[section]
\newtheorem{rmk}{Remark}[section]
\newtheorem{prb}{Problem}[section]
\newtheorem{cor}{Corollary}[section]
\newcommand{\angb}[1] {\left<{#1}\right>}
\def \ones			{{\mathds{1}}} 
\newcommand{\onesvec}[1] 	{\ones_{#1}} 
\newcommand{\zerovec}[1] 	{\mathbf{0}_{#1}} 
\DeclareMathOperator{\diag}{diag} 
\newcommand{\eye}[1]{I_{#1}} 
\renewcommand{\vec}[3]{\mathrm{vec}_{#1}^{#2}(#3)} 
\renewcommand{\diag}[3]{\mathrm{diag}_{#1}^{#2}(#3)} 
\def \kronecker {\otimes} 
\def \graph		{\mathcal{G}} 
\title{A Robustness Analysis to Structured Channel Tampering \\ over Secure-by-design Consensus Networks}
\author{Marco~Fabris and~Daniel~Zelazo~\IEEEmembership{Senior Member,~IEEE}	
	\thanks{M. Fabris is with the Department of Information Engineering, University of Padua, Padua, Italy. 
		    D. Zelazo is with the Faculty of Aerospace Engineering, Technion, Haifa, Israel. This research was made possible thanks to the support of Ms. Beverly Bavil, Mr. David Dibner and another anonymous donor. 
		    \mbox{Corresponding author: D. Zelazo, {\tt \scriptsize \href{mailto:dzelazo@technion.ac.il}{dzelazo@technion.ac.il}}}}}
\begin{document}

\maketitle
\thispagestyle{empty}
\pagestyle{empty}

\begin{abstract}
	This work addresses multi-agent consensus networks where adverse attackers affect the convergence performances of the protocol by manipulating the edge weights. We generalize \cite{FabrisZelazo2022} and provide guarantees on the agents' agreement in the presence of attacks on multiple links in the network. A stability analysis is conducted to show the robustness to channel tampering in the scenario where part of the codeword, corresponding to the value of the edge weights, is corrupted. Exploiting the built-in objective coding, we show how to compensate the conservatism that may emerge because of multiple threats in exchange for higher encryption capabilities. Numerical examples related to semi-autonomous networks are provided.
\end{abstract}

\begin{keywords}
  Agents-based systems, Network analysis and control, Secure consensus protocols 
\end{keywords}


\section{Introduction} 
\label{sec:intro}

The consensus problem, consisting in the design of networked control algorithms under which all individuals of a given multi-agent system (MAS) attain an agreement on a certain quantity of interest \cite{OlfatiFaxMurray2007}, is commonly tackled when it is required to achieve a global prefixed task.
However, due to the openness of communication protocols and the complexity of networks, the agreement of MASs may be vulnerable to malicious cyber-attacks \cite{MengXiaoLi2020}.
In particular, if the agent sensors are threatened by an attacker, the measured data may be unreliable or faulty. Indeed, the attack signals can even disrupt the control performance of the group of agents through the communication topology. 
Therefore, resilient solutions are required to ensure that MASs fulfill consensus under security hazards \cite{WangDengGuo2023}. Consequently, the secure control of MASs is now a crucial issue to be investigated \cite{ZhangFengShi2021,YangXiaoYang2021,HuoWuZhang2022}. 

Several recent studies illustrate the importance of giving guarantees against  cyber-threats while these are attempting to disrupt a MAS that is trying to reach consensus. In \cite{LiuShiYan2022,GaoHuJiang2022}, 
deception attackers injecting false data are assumed to attack the agents or communication channels. 
To counteract this kind of disruption, classic observers, impulsive control methods and event-triggered adaptive cognitive control have been leveraged. 
Denial of service (DoS) attacks then represent another challenging class of cyber-threats: robust control techniques have been developed in \cite{ElkhinderSiddigEl-Ferik2022,WangLiDuan2022,SathishkumarLiu2022} 
	to ensure sufficient levels of agreement over MASs under DoS 
	providing guarantees based on the maximum \textquotedblleft quality of service\textquotedblright~or Lyapunov theory. 

As already introduced in \cite{FabrisZelazo2022}, we embrace a different perspective. Instead of developing tools to secure existing networks (see \cite{GaoHeDong2022}), we provide inherently secure embedded measures through the adoption of a network manager to guarantee robust consensus convergence. \rev{For privacy and safety concerns, such a manager is not allowed to access local states. Rather, it only intervenes in an initial phase to ensure desired convergence performance via edge weight assignment (see e.g. \cite{XiaoBoyd2004}) in a secure way.} 
Nonetheless, differently from our previous work, this paper is meant to generalize the \textit{secure-by-design consensus} (SBDC) dynamics towards multi-agent consensus networks where adverse attackers affect the convergence performance \rev{through a \textit{structural} hit to the communication between manager and agents, thus corrupting edge weights in \textit{more than a single link} of the system.}  We summarize our main contributions below. 
 
\begin{itemize}
\item Two new guarantees based on the small gain theorem for the robust stability of the agents' agreement are given in both continuous and discrete time domains when multiple network edges undergo a structured weight deviation from their nominal values.

\item We introduce the notion of a resilience gap used to characterize the conservatism of the robustness analysis.  We show that for spanning trees, the resilience gap is always zero even in multi-attack scenarios.  We also discuss how this gap can be reduced by modulating the encryption capabilities used in the network.
\end{itemize}
The organization of the paper follows.
Sec. \ref{sec:prelim_models} introduces the preliminary notions for multi-agent consensus and the SBDC networks. 
Sec. \ref{sec:robust-analysis} provides a robustness analysis for the latter when subject to channel tampering modeled as multiple edge weight perturbations. 
A numerical case study on semi-autonomous networks is reported in Sec. \ref{sec:simulations} to assess such theoretical results and, lastly, concluding remarks and future works are sketched in Sec. \ref{sec:conclusions}.

\paragraph*{Notation} The set of real numbers, the $l$-dimensional (column) vector whose elements are all ones and the $l$-dimensional (column) null vector are denoted by $\Rset{}$, $\onesvec{l} \in \Rset{l}$ and $\zerovec{l}\in \Rset{l}$, respectively, while $\eye{l} \in \Rset{l\times l}$ refers to as the identity matrix. 
Let $\Omega \in \Rset{l \times l}$ be a square matrix. Relation $\Omega \succeq 0$ means that $\Omega$ is symmetric positive semidefinite. Notation $[\Omega]_{ij}$ addresses the entry of matrix $\Omega$ at row $i$ and column $j$, while $\Omega^{\top}$,  
$\mathrm{tr}(\Omega)$ and $\left\|\Omega\right\|$ denote respectively its transpose, 
its trace and its spectral norm. Operator $\mathrm{col}_{l}[\Omega]$ represents the $l$-th column of $\Omega$ and its $i$-th eigenvalue is denoted by $\lambda_{i}^{\Omega}$. 
The vector space spanned by a vector $\omega \in \Rset{l}$ is denoted with $\angb{\omega}$. The infinity norm of $\omega$ is identified by $\left\|\omega\right\|_{\infty}$.
Also, $\Bomega=\vec{i=1}{l}{\omega_{i}}$ defines the vectorization operator that stacks vectors $\omega_{i}$, $i=1,\dots,l$ as $\Bomega = \begin{bmatrix}
	\omega_{1}^{\top} & \dots & \omega_{l}^{\top}
\end{bmatrix}^{\top}$; whereas, $\diag{i=1}{l}{\varsigma_i}$ is a block diagonal matrix made up with $\varsigma_i \in \mathbb{R}$, $i=1,\dots,l$, on the diagonal and $\mathrm{diag}(\Bomega) = \diag{i=1}{l}{\omega_i}$.
Lastly, 
the Kronecker product is denoted with $\kronecker$.


\section{Preliminary results}
\label{sec:prelim_models}

Consensus models for MASs and preliminary notions are here given along with a brief overview of robustness results in consensus networks having multiple uncertain edges. The SBDC protocol is then briefly recalled.

\subsection{Overview on uncertain consensus networks}
An $n$-agent network can be modeled through a weighted graph $\mathcal{G}=\left(\mathcal{V},\mathcal{E},\Wmc\right)$ so that each element in the \textit{vertex set} $\mathcal{V}=\left\{1, \dots, n\right\}$ is associated to an agent in the group, while the \textit{edge set} $\mathcal{E} = \{e_{k}\}_{k=1}^{m}  \subseteq \mathcal{V}\times \mathcal{V}$ characterizes the agents' information exchange. 
In addition, $\Wmc = \{w_{k}\}_{k=1}^{m}$, with $m = |\Emc|$, denotes the set of weights attributed to each link. Throughout this work, bidirectional interactions among agents are assumed; hence, $\mathcal{G}$ is an \textit{undirected} graph.
A \textit{cycle} is defined as a non-empty and distinct sequence of edges joining a sequence of vertices, in which only the first and last vertices are equal. If a graph does not contain cycles it is said \textit{acyclic} and if it is also connected it is called a \textit{tree}.
The \textit{neighborhood} of node $i$ is defined as the set $\mathcal{N}_i=\left\{j\in\mathcal{V}\setminus\{i\} \;|\;(i,j)\in \Emc \right\}$, 
while the degree of node $i$ is defined through the cardinality $d_{i} = |\mathcal{N}_{i}|$ of neighborhood $\Nmc_i$.
Moreover, the \textit{incidence matrix} is denoted as $E \in \Rset{n\times m}$, in which each column $k \in \{1,\ldots,m\}$ is defined via the $k$-th (ordered) edge $(i,j) \in \Emc$, where $i<j$ is chosen w.l.o.g., and for edge $k$ corresponding to $(i,j)$ one has $[E]_{lk} = -1$, if $l = i$; $[E]_{lk} = 1$, if $l = j$; $[E]_{lk} = 0$, otherwise.
For all $k = 1,\ldots,m$, the weight $w_{k} := w_{ij} = w_{ji} \in \Rset{}$ is related to $k$-th edge $(i,j)$, and $W=\diag{k=1}{m}{w_k}$ is the diagonal matrix of edge weights.
Additionally, the \textit{Laplacian matrix}, incorporating the topological information about $\graph$, is denoted as $L(\Gmc) = E W E^{\top} $ (see \cite{Lunze2019}). Henceforward, we also suppose that graph $\graph$ is \textit{connected} and $L(\Gmc) \succeq 0$, thus having eigenvalues $\lambda_{i}^{L}$, for $i = 1,\ldots,n$, such that $0 = \lambda_{1}^{L} < \lambda_{2}^{L} \leq \cdots \leq \lambda_{n}^{L}$. A sufficient condition to satisfy the latter requisite, which is adopted throughout this paper, is to take $w_{ij}> 0$ for all $(i,j)$.
With an appropriate labeling of the edges, we can always assume that the incidence matrix $E = \begin{bmatrix}
	E_{\Tmc} & E_{\Cmc}
\end{bmatrix}$ can be partitioned into the incidence matrix $E_{\Tmc}$, relative to a spanning tree $\Tmc \subseteq \Gmc$ with $\tau=n-1$ edges, and the incidence matrix $E_{\Cmc}$, associated to $\Cmc = \Gmc \setminus \Tmc$. Consequently, we define the cut-set matrix of $\Gmc$ (see \cite{ZelazoBurger2017}) as
$R_{(\Tmc,\Cmc)} = \begin{bmatrix}
	I_{\tau} &  T_{(\Tmc,\Cmc)} 
\end{bmatrix}$, with $T_{(\Tmc,\Cmc)} = ( E_{\Tmc}^{\top} E_{\Tmc} )^{-1} E_{\Tmc}^{\top}E_{\Cmc}$. 

A summary of the weighted consensus problem in MASs follows.  Let us consider a group of $n$ homogeneous agents, modeled by a weighted and connected graph $\graph$, 
and assign a continuous-time state $x_{i} = x_{i}(t) \in \Rset{D}$ to the $i$-th agent, for $i = 1,\dots,n$. The full network state is given by $\x = \vec{i=1}{n}{x_i} 
\in X \subseteq \Rset{N}$, with $N=nD$. The weighted consensus  for a MAS can be characterized as follows.

\begin{defn}[Weighted Consensus \cite{Lunze2019}]\label{def:consensus}
	{An $n$-agent network achieves \emph{consensus} if $ \underset{t\rightarrow +\infty}{\lim}~ \x(t) \in \Amc $, where, for some $\omega \in \mathbb{R}^{D}$, $\Amc = \angb{\ones_{n}} \otimes \omega$ is termed the \textit{agreement set}.
	}
\end{defn}

For a connected graph $\Gmc$ with positive weights, it is well known that the \textit{linear weighted consensus protocol}, given by
\begin{equation}\label{eq:LAP}
	\dot{\x} = - \L(\Gmc) \x,
\end{equation}
where $\L(\Gmc) = (L(\Gmc) \kronecker I_{D})$, satisfies $\x(t) \in \Amc$ as $t \rightarrow +\infty$.

In this direction, we also revisit a robustness result for the consensus protocol with small-magnitude perturbations on the edge weights \cite{ZelazoBurger2017}. 	Within this framework, we take into consideration the perturbed Laplacian matrix $L(\Gmc_{\Delta^W}) = E (W+\Delta^W) E^{\top}$ for a structured norm-bounded perturbation 
\begin{align}\label{uncertaintyset}
    \Delta^W \in \BDelta^{W} = \{\Delta^W  \, : \, \Delta^W =\diag{k=1}{m}{\delta^{w}_k}, \|\Delta^W \| \leq \bar{\delta}^W  \}.
\end{align}
Letting $\Emc_{\Delta} := \{e_{1}^{\Delta},e_{2}^{\Delta},\ldots\} \subseteq \Emc$ be the (nonempty) subset of uncertain edges, we can define the matrix $P \in \{0,1\}^{|\Emc| \times |\Emc_{\Delta}|}$ that selects the uncertain edges in $\Emc$, with $[P]_{ij}=1$, if $i$ and $j$ satisfy $e_{i} = e_{j}^{\Delta}  $; and $[P]_{ij}=0$, otherwise. This~leads~to

\begin{lem} \label{thm:effective_resistance_m}
	Let $\Delta^{W} \in \mathbb{R}^{|\Emc_{\Delta}|\times|\Emc_{\Delta}|}$, with $\Delta^W$ diagonal, and consider the nominal weighted consensus protocol \eqref{eq:LAP}. Then the perturbed consensus protocol 
		\begin{equation}\label{eq:perturbed_cons_dyn}
		\dot{\x} = -(L(\Gmc_{\Delta^W}) \kronecker I_{D}) \x
	\end{equation} 
	achieves consensus $\forall\Delta^{W} \in \BDelta^{W}$ (defined in \eqref{uncertaintyset}), if
	\begin{equation}\label{eq:uncguar_multi}
		\hspace{-.2cm}\left\|\Delta^{W} \right\| < {\underbrace{ \left\| P^{\top} R_{(\Tmc,\Cmc)}^{\top} (R_{(\Tmc,\Cmc)} W R_{(\Tmc,\Cmc)}^{\top})^{-1} R_{(\Tmc,\Cmc)} P  \right\|}_{=:\Rmc_{\Emc_{\Delta}}(\Gmc)}}^{-1}. 
	\end{equation}

\end{lem}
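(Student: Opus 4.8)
The plan is to reduce the consensus question to a positive-definiteness condition on the cut-transformed weight matrix, and then close the argument with a small-gain (congruence) estimate that reproduces \eqref{eq:uncguar_multi}. First I would observe that, since the perturbed operator has Kronecker form $L(\Gmc_{\Delta^W}) \kronecker I_{D}$, its spectrum is that of the scalar Laplacian $L(\Gmc_{\Delta^W}) = E(W + P\Delta^{W}P^{\top})E^{\top}$ with each eigenvalue repeated $D$ times and eigenvectors of tensor type; hence it suffices to treat $D=1$. Because $W + P\Delta^{W}P^{\top}$ is real and diagonal, $L(\Gmc_{\Delta^W})$ is symmetric, and $E^{\top}\ones_{n}=0$ forces $\ones_{n}\in\ker L(\Gmc_{\Delta^W})$. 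Consequently the flow \eqref{eq:perturbed_cons_dyn} reaches the agreement set $\Amc$ if and only if $L(\Gmc_{\Delta^W})\succeq 0$ with one-dimensional kernel $\angb{\ones_{n}}$ (symmetry makes positive semidefiniteness coincide with the required right-half-plane spectrum of the continuous-time dynamics).

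Next I would exploit the spanning-tree structure. Since the column space of $E_{\Tmc}$ is the $(n-1)$-dimensional cut space and every column of $E_{\Cmc}$ lies in it, one obtains the factorization $E = E_{\Tmc} R_{(\Tmc,\Cmc)}$. Substituting yields
\[
L(\Gmc_{\Delta^W}) = E_{\Tmc}\,M\,E_{\Tmc}^{\top}, \qquad M := R_{(\Tmc,\Cmc)}\bigl(W + P\Delta^{W}P^{\top}\bigr)R_{(\Tmc,\Cmc)}^{\top}.
\]
As $E_{\Tmc}$ has full column rank $n-1$ with $\ker E_{\Tmc}^{\top} = \angb{\ones_{n}}$, applying the map $x\mapsto E_{\Tmc}^{\top}x$ shows that $L(\Gmc_{\Delta^W})\succeq 0$ with kernel exactly $\angb{\ones_{n}}$ if and only if $M\succ 0$. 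Thus consensus for a given $\Delta^{W}$ is equivalent to $M\succ 0$, and no extra kernel direction can appear, so the zero eigenvalue stays simple.

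Then I would split $M = S + B\,\Delta^{W}\,B^{\top}$, where $S := R_{(\Tmc,\Cmc)} W R_{(\Tmc,\Cmc)}^{\top}\succ 0$ is the nominal (invertible, as the weights are positive on a connected graph) and $B := R_{(\Tmc,\Cmc)}P$. Performing the congruence by $S^{-1/2}$,
\[
M\succ 0 \iff I + S^{-1/2}B\,\Delta^{W}\,B^{\top}S^{-1/2}\succ 0 .
\]
The perturbation block is symmetric, so a sufficient condition is that its spectral norm be strictly below $1$. Using submultiplicativity and the identity $\|S^{-1/2}B\|^{2}=\|B^{\top}S^{-1}B\|$ (the matrices $(S^{-1/2}B)(S^{-1/2}B)^{\top}$ and $(S^{-1/2}B)^{\top}(S^{-1/2}B)$ share their nonzero spectrum) gives $\|S^{-1/2}B\Delta^{W}B^{\top}S^{-1/2}\| \le \|\Delta^{W}\|\,\|B^{\top}S^{-1}B\|$. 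Recognizing $B^{\top}S^{-1}B = P^{\top}R_{(\Tmc,\Cmc)}^{\top}(R_{(\Tmc,\Cmc)}W R_{(\Tmc,\Cmc)}^{\top})^{-1}R_{(\Tmc,\Cmc)}P$, whose norm is precisely $\Rmc_{\Emc_{\Delta}}(\Gmc)$, the condition $\|\Delta^{W}\|\,\Rmc_{\Emc_{\Delta}}(\Gmc)<1$ guarantees $M\succ 0$, which is exactly \eqref{eq:uncguar_multi}.

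The step I expect to be the main obstacle is the second one: justifying the factorization $E = E_{\Tmc}R_{(\Tmc,\Cmc)}$ and, above all, the equivalence ``consensus $\iff M\succ 0$.'' One must argue carefully that congruence by $E_{\Tmc}^{\top}$ transfers both the semidefiniteness and the exact kernel dimension, so that positivity of $M$ is not merely sufficient for a simple zero eigenvalue but characterizes it. Once that reduction is in place, the remaining congruence and norm estimate are routine, and the appearance of $\Rmc_{\Emc_{\Delta}}(\Gmc)$ is immediate from the definition of $B$.
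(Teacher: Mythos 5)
Your argument is correct, but it takes a genuinely different route from the paper. The paper's own proof is essentially a citation: it invokes Theorem V.2 of the Zelazo--B\"urger reference to assert asymptotic stability of the edge-agreement version of \eqref{eq:perturbed_cons_dyn} under \eqref{eq:uncguar_multi}, and then uses connectedness of $\Gmc_{\Delta^W}$ to conclude $L(\Gmc_{\Delta^W})\succeq 0$ with a simple zero eigenvalue, hence consensus per Def.~\ref{def:consensus}. You instead re-derive the bound from first principles: the factorization $E=E_{\Tmc}R_{(\Tmc,\Cmc)}$ (valid because every column of $E_{\Cmc}$ lies in the column space of $E_{\Tmc}$, so the projection defining $T_{(\Tmc,\Cmc)}$ reproduces $E_{\Cmc}$), the reduction of consensus to $M:=R_{(\Tmc,\Cmc)}(W+P\Delta^{W}P^{\top})R_{(\Tmc,\Cmc)}^{\top}\succ 0$ via the rank-$(n-1)$ congruence with $\ker E_{\Tmc}^{\top}=\angb{\ones_{n}}$, and the congruence by $S^{-1/2}$ followed by the norm estimate $\|S^{-1/2}B\Delta^{W}B^{\top}S^{-1/2}\|\le\|\Delta^{W}\|\,\|B^{\top}S^{-1}B\|$ with $B^{\top}S^{-1}B$ exactly the matrix whose norm defines $\Rmc_{\Emc_{\Delta}}(\Gmc)$; all steps check out (including treating the $|\Emc_{\Delta}|$-dimensional $\Delta^{W}$ through the selection matrix $P$, which resolves the paper's slight notational mismatch between \eqref{uncertaintyset} and the lemma statement). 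What your static, Lyapunov-free argument buys is self-containedness and a sharper structural picture: the exact stability characterization is $M\succ 0$, so the only conservatism in \eqref{eq:uncguar_multi} comes from the submultiplicative bound, which dovetails with the resilience-gap discussion in Sec.~\ref{conserv_analysis}; what the paper's citation-based proof buys is brevity and a direct link to the edge-agreement/small-gain machinery (and its dynamic interpretation) already established in the referenced work.
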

\begin{proof}
		By \cite[Theorem V.2]{ZelazoBurger2017}, the edge-agreement version of \eqref{eq:perturbed_cons_dyn} is asymptotically stable. Consequently, as $\Gmc_{\Delta^W}$ is connected, one has $L(\Gmc_{\Delta^W}) \succeq 0$ with a simple eigenvalue at $0$. Therefore, Def. \ref{def:consensus} is satisfied $\forall\Delta^{W} \in \BDelta^{W}$.
\end{proof}

In a slight abuse of convention, we refer to Lemma \ref{thm:effective_resistance_m} as \emph{robust stability} result; see \cite{ZelazoBurger2017} for more discussion on this notion.  
When $|\Emc_{\Delta}|=1$, it was shown in \cite{ZelazoBurger2017} that the bound in \eqref{eq:uncguar_multi} is tight and that $\Rmc_{\left\lbrace (u,v) \right\rbrace}(\Gmc)$ can be interpreted as the \textit{effective resistance} 
between a pair of nodes $(u,v)$.  However, for multi-edge attacks this bound is inherently conservative; this aspect will be elaborated upon in Section \ref{conserv_analysis}.


\subsection{Secure-by-design consensus dynamics}\label{ssec:SBDCdyn}
In this work, we consider MASs that are subject to the same key principles assumed in \cite{FabrisZelazo2022}, that is with presence of \rev{tasks, objective coding, information localization and a network manager}\footnote{\rev{The network manager does not govern the agents' dynamics, namely it should not be intended to fulfill the role of a global controller. Instead, it is chosen to precisely serve as an encryption mechanism to set up and secure distributed algorithms running on the underlying MAS.}}. 
These elements, which comprise the basic setup of the SBCD dynamics, are recalled in the following lines.

Tasks are described by an encoded parameter $\theta$ that we term the \textit{codeword} and the space of all tasks is denoted as $\Theta$. Each agent in the network then decodes this objective using its \textit{objective decoding function}, defined as $p_i: \Theta \to \Pi_i$, where $\Pi_i$ depends on the particular application (e.g. $\Pi_i \subseteq \Rset{n}$ within the consensus setting). For $\theta \in \Theta$, $p_i(\theta)$ is called the \emph{localized objective}. Instead, if $\theta \notin \Theta$, $p_i(\theta)$ may not be computable; nonetheless, any agent collecting such a codeword may send an alert. More precisely, the objective coding is established via the non-constant functions $p_{i}(\theta) : \Theta \rightarrow \Pi_i \subseteq \Rset{n}$, such that $ [p_{i}(\theta)]_{j} := p_{ij}(\theta) $ with 
\begin{equation}\label{eq:pijbasicchar}
	p_{ij}(\theta) = \begin{cases}
		w_{ij}, \quad \text{ if } (i,j) \in \Emc; \\
		0, \quad~~~\!  \text{ otherwise}.
	\end{cases}
\end{equation}
\rev{The values $w_{ij}$ in \eqref{eq:pijbasicchar} coincide with the nominal desired consensus weights that are assigned, encoded and broadcast by the network manager, to achieve desired convergence rates or other performance metrics in consensus networks.} 
Moreover, the information localization about the global state $\x$ is expressed by means of $h_{i}(\x) : X \rightarrow  Y_{i} \subseteq \Rset{D \times n} $, such that $\mathrm{col}_{j}[h_{i}(\x)] := h_{ij}(\x(t)) \in \Rset{D}$ with $h_{ij}(\x)=x_{i}-x_{j}$, if $(i,j) \in \Emc$; $h_{ij}(\x)=\zerovec{D}$, otherwise. 
So, the $i$-th agent's dynamics for the SBDC is determined by
\begin{align}\label{eq:secure_consensus} 
	\dot{x}_{i} = -  {\textstyle\sum}_{j \in \Nmc_{i}} p_{ij}(\theta) h_{ij}(\x)   
	. 
\end{align}

Remarkably, \eqref{eq:secure_consensus} reproduces exactly the linear consensus protocol already introduced in \eqref{eq:LAP}
and, defining $\p(\theta) = \vec{i=1}{n}{p_i(\theta)} \in \Rset{n^{2}}$ and $\H(\x) = \diag{i=1}{n}{h_i(\x(t))} \in \Rset{N \times n^{2}}$, dynamics \eqref{eq:secure_consensus} can be rewritten as
\begin{equation}\label{eq:SBDC}
	\dot{\x} = -\H(\x) \p(\theta),
\end{equation}
leading to the following theoretical result.
\begin{lem}[{\cite[Lemma III.1]{FabrisZelazo2022}}]\label{lem:standardandsecurepersp}
	The SBDC protocol \eqref{eq:SBDC} reaches agreement for any given objective decoding function $\p$ that satisfies \eqref{eq:pijbasicchar}.
\end{lem}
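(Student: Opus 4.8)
The plan is to show that the SBDC protocol \eqref{eq:SBDC} coincides, as a vector field on $X$, with the nominal linear consensus protocol \eqref{eq:LAP}, and then to invoke the convergence property of \eqref{eq:LAP} already recalled after Definition \ref{def:consensus}. First I would expand the $i$-th agent dynamics \eqref{eq:secure_consensus}: substituting the characterization \eqref{eq:pijbasicchar} of $p_{ij}(\theta)$ together with $h_{ij}(\x) = x_i - x_j$ for $(i,j) \in \Emc$, each summand with $j \in \Nmc_i$ reduces to $w_{ij}(x_i - x_j)$, while the contributions with $(i,j) \notin \Emc$ vanish since either $p_{ij}(\theta) = 0$ or $h_{ij}(\x) = \zerovec{D}$. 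Hence $\dot{x}_i = -\sum_{j \in \Nmc_i} w_{ij}(x_i - x_j)$, which is exactly the $i$-th block row of $-(L(\Gmc) \kronecker I_D)\x$ once one recalls $L(\Gmc) = E W E^\top$, so that $[L(\Gmc)]_{ii} = \sum_{j \in \Nmc_i} w_{ij}$ and $[L(\Gmc)]_{ij} = -w_{ij}$ for $j \in \Nmc_i$.

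Next I would reassemble these block rows into the stacked dynamics. Using $\p(\theta) = \vec{i=1}{n}{p_i(\theta)}$ and the block-diagonal $\H(\x) = \diag{i=1}{n}{h_i(\x)}$, the product $\H(\x)\p(\theta)$ collects precisely the per-agent sums computed above, so $-\H(\x)\p(\theta) = -(L(\Gmc) \kronecker I_D)\x = -\L(\Gmc)\x$. This identity establishes that \eqref{eq:SBDC} and \eqref{eq:LAP} are the same dynamical system. Since $\Gmc$ is connected and the weights $w_{ij}$ delivered through \eqref{eq:pijbasicchar} are the nominal positive consensus weights, the Laplacian satisfies $L(\Gmc) \succeq 0$ with a simple zero eigenvalue and eigenvector $\onesvec{n}$; the recalled property of \eqref{eq:LAP} then yields $\x(t) \to \Amc$ as $t \to +\infty$, i.e.\ consensus in the sense of Definition \ref{def:consensus}.

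This statement is essentially a reconciliation of the ``secure'' and ``standard'' viewpoints, so the argument is one of bookkeeping rather than of genuine analytic difficulty; no stability estimate has to be proved here, as convergence of \eqref{eq:LAP} is taken as known. The only point requiring care is the Kronecker/block-diagonal bookkeeping, namely verifying that the stacked field $\H(\x)\p(\theta)$ matches $(L(\Gmc) \kronecker I_D)\x$ term by term through the vectorization and block-diagonal conventions. Once that equality is confirmed, the conclusion is immediate.
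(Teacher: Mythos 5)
Your proof is correct and follows essentially the same route as the paper: the lemma (imported from \cite{FabrisZelazo2022}) rests precisely on the identification $\H(\x)\p(\theta) = \L(\Gmc)\x$, which the text states just before the lemma when noting that \eqref{eq:secure_consensus} reproduces \eqref{eq:LAP}, after which the known convergence of the weighted consensus protocol on a connected graph with positive weights gives agreement. Your bookkeeping of the vanishing off-edge terms and the Kronecker/block-diagonal reassembly is exactly the intended argument.
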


Likewise, we consider and investigate the well-known discrete-time consensus dynamics described by
\begin{equation}\label{eq:Consdt}
	\x(t+1) = \x(t) - \epsilon \L(\Gmc)\x(t)  ,
\end{equation}
where $\epsilon$ is a common parameter shared among all agents and designed to belong to the interval $(0,2/\lambda_{n}^{L})$, as shown in \cite{FabrisMichielettoCenedese2022}. According to the characterization in \eqref{eq:SBDC}, the SBCD dynamics adopted in discrete time in \eqref{eq:Consdt} is also given by
\begin{equation}\label{eq:SBDCdt}
	\x(t+1) = \x(t) - \epsilon \H(\x(t)) \p(\theta),
\end{equation}
since $\H(\x) \p(\theta) = \L(\Gmc)\x$ holds thanks to Lem. \ref{lem:standardandsecurepersp}.


\section{Robustness of the SBDC protocol to structured channel tampering} \label{sec:robust-analysis}

The original contribution provided by this study aims at the design of secure network systems to structured channel tampering while achieving consensus task. To this aim, the system is embedded with security measures that allow to render the network robust to small signal perturbations on some of the links. Also, a description for the structured channel tampering is given along with the relative robustness analysis for the SBDC protocol under multiple threats.

\subsection{Models, problem statement and key assumptions} \label{ssec:modelchanneltampmulti}

The structured channel tampering problem under analysis is formulated as follows.
Similarly to the model adopted in \cite{FabrisZelazo2022}, the prescribed codeword $\theta$ is subject to a perturbation $\delta^\theta \in \BDelta^{\theta} = \{\delta^\theta \, : \,  \|\delta^\theta \|_{\infty} \leq \bar{\delta}^{\theta}  \}$. We let $\Theta$ be a Euclidean subspace, in particular $\Theta = \Theta_{11} \times \Theta_{12} \times \cdots \times \Theta_{nn} \subseteq \Rset{n^{2}}$, and allow a codeword $\theta  = \vec{i=1}{n}{\theta_i} \in \Theta$ to be divided into (at most) $n(n-1)/2$ relevant ``subcodewords" $\theta^{(k)} := [\theta_i]_j = \theta_{ij}$, with $k=1,\ldots,m$, such that $\theta_{ij}=\theta_{ji}$, if $i\neq j$, and $\theta_{ii}$ is free to vary, for $i=1,\ldots,n$. Each $\theta_{ij} \in \Theta_{ij}  \subseteq \Rset{}$ can be seen as the $j$-th component of the $i$-th codeword fragment $\theta_i$, with $i=1,\ldots,n$. Such subcodewords influence the value of $p_{ij}(\theta)$ directly if and only if $j \in \Nmc_i$, i.e., it holds that $p_{ij}(\theta) = p_{ij}(\theta_{ij})$ for all $(i,j) \in \Emc$.


Therefore, 
the consensus description to support this investigation is such that the $i$-th nominal dynamics in \eqref{eq:SBDC} is modified as
\begin{equation}\label{eq:perturbedSBDC}
	\dot{x}_{i} = - {\textstyle\sum}_{j \in \Nmc_{i}} p_{ij}(\theta_{ij} + \delta^{\theta}_{ij}) h_{ij}(\x), \quad i = 1,\ldots,n,
\end{equation}
with $\delta^{\theta}_{ij} = [\delta^{\theta}_i]_j $ and $\delta^{\theta}_{i}$ satisfying $\delta^{\theta}  = \vec{i=1}{n}{\delta^{\theta}_i}$. 
Analogously, the $i$-th perturbed discrete time dynamics in \eqref{eq:SBDCdt}
can be written as:
\begin{equation}\label{eq:perturbedSBDCdt}
	x_{i}(t+1) = x_{i}(t) - \epsilon {\textstyle\sum}_{j \in \Nmc_{i}} p_{ij}(\theta_{ij} + \delta^{\theta}_{ij}) h_{ij}(\x(t)),
\end{equation}
where $\epsilon>0$ must be selected. In light of the previous discussion, the following design problem can be now stated.
\begin{prb}\label{prob:resilenceSBDC}
	Design objective functions $p_{ij}$ such that \eqref{eq:perturbedSBDC} (resp., \eqref{eq:perturbedSBDCdt} in discrete time) reaches consensus, independently from the codeword $\theta \in \Theta \subseteq \Rset{n^{2}}$, while the underlying MAS is subject to a structured attack $\delta^{\theta} \in \BDelta^{\theta}$ on multiple edges (belonging to $\Emc_{\Delta}$), i.e., with $\delta_{ij}^{\theta} = 0$ for all $(i,j)\in \Emc \setminus \Emc_{\Delta}$. 
	In addition, determine the largest $\rho^{\theta}_{\Delta}$ ensuring robust stability of \eqref{eq:perturbedSBDC} (resp., \eqref{eq:perturbedSBDCdt} in discrete time).
\end{prb}

Within this framework, it is possible to leverage Lem. \ref{thm:effective_resistance_m} and yield the main theoretical contribution of this paper, represented by the robustness guarantees for system \eqref{eq:perturbedSBDC} when the target of a cyber-physical attack is a multitude of edges. 
In this direction, we study how the robust stability of \eqref{eq:perturbedSBDC} is affected by perturbations on all the weights $p_{uv}(\theta_{uv}) = w_{uv}$ attached to the connections $(u,v) \in \Emc_{\Delta}$ that are caused by the deviations 
of each subcodeword $\theta_{uv}$.


As clarified later in more detail, the same three additional assumptions on the $p_{i}$'s proposed in \cite{FabrisZelazo2022} are \textit{sufficient} to tackle Problem \ref{prob:resilenceSBDC}, namely, this robustness analysis is again restricted to a particular choice for the objective coding, that is for concave and Lipschitz continuous differentiable functions $p_{i}$. Thus, we let
the $i$-th objective decoding function 
adopted in model \eqref{eq:perturbedSBDC} have the following properties.
\begin{asm} \label{asm:3properties}
	Each $p_{i}: \Theta \rightarrow \Pi_{i}$, with $i = 1,\ldots,n$, has the subsequent characterization:
	\begin{enumerate}
		\item[\textit{(i)}] values $[p_{i}(\theta)]_{j} = p_{ij}(\theta_{ij})$, with $\theta_{ij} = [\theta_i]_j$, satisfy \eqref{eq:pijbasicchar} for all $(i,j) \in \Emc$ and  are not constant w.r.t. $\theta_{ij}$; 
		\item[\textit{(ii)}] $p_{ij}$ is concave $\forall \theta \in \Theta$, i.e., $p_{ij}(\varsigma \eta_{1} + (1-\varsigma) \eta_{2}) \geq \varsigma p_{ij}(\eta_{1}) + (1-\varsigma) p_{ij}(\eta_{2})$, $\varsigma \in [0,1]$, $\forall \eta_{1},\eta_{2} \in \Theta_{ij}$;
		\item[\textit{(iii)}] $p_{ij}$ is Lipschitz continuous and differentiable w.r.t. $\theta$, implying $\exists K_{ij} \geq 0: ~|p_{ij}^{\prime}(\theta_{ij})| \leq K_{ij}$, $\forall (i,j) \in \Emc$.
	\end{enumerate} 
\end{asm}

Also, to provide analytical guarantees to multiple attacks striking the network, we will make use of the global quantity 
\begin{equation}\label{eq:K_Delta}
	K_{\Delta} = \max_{(u,v) \in \Emc_{\Delta}} \left\lbrace K_{uv}\right\rbrace. 
\end{equation}

\subsection{Guarantees for multiple attacks}\label{ssec:continuoustime_multi}
The guarantees in \cite[Theorem IV.1]{FabrisZelazo2022} can be extended as follows for a continuous-time multiple-attack scenarios. 

\begin{thm} \label{thm:maximum_perturb_codeword_multi}
	Assume that the characterization for objective decoding functions $p_i$ in Asm. \ref{asm:3properties} holds. 
	For a structured injection attack $\delta^{\theta} \in \BDelta^{\theta}$ affecting all edges in $\Emc_{\Delta}$ define $\Rmc_{\Emc_{\Delta}}(\Gmc)$ and $K_{\Delta}$ as in \eqref{eq:uncguar_multi} and \eqref{eq:K_Delta}, respectively. Then the perturbed consensus protocol \eqref{eq:perturbedSBDC} is stable and achieves agreement for all $\delta^\theta$ whenever
	\begin{equation}\label{eq:sgrepatt_multi}
		\left\|\delta^\theta\right\|_{\infty} < \rho^\theta_{\Delta} = (K_{\Delta} \Rmc_{\Emc_{\Delta}}(\Gmc) )^{-1},
	\end{equation}
	independently from the values taken by any codeword $\theta \in \Theta$. 
\end{thm}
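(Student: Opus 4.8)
The plan is to reduce Theorem \ref{thm:maximum_perturb_codeword_multi} to the robust-stability result of Lemma \ref{thm:effective_resistance_m} by recognizing that, once a codeword $\theta$ and an admissible attack $\delta^{\theta}$ are fixed, the perturbed SBDC dynamics \eqref{eq:perturbedSBDC} is merely an instance of the perturbed linear consensus protocol \eqref{eq:perturbed_cons_dyn} driven by a \emph{structured} edge-weight deviation. First, I would note that since $h_{ij}(\x) = x_{i}-x_{j}$ on each edge and the perturbed weights $\tilde{w}_{ij} := p_{ij}(\theta_{ij}+\delta^{\theta}_{ij})$ are constants, \eqref{eq:perturbedSBDC} coincides with $\dot{\x} = -(L(\Gmc_{\Delta^{W}}) \kronecker I_{D})\x$, where $L(\Gmc_{\Delta^{W}}) = E(W+\Delta^{W})E^{\top}$ and the induced deviation on the $k$-th uncertain edge $(u,v) \in \Emc_{\Delta}$ is $\delta^{w}_{k} := p_{uv}(\theta_{uv}+\delta^{\theta}_{uv}) - w_{uv}$. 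For every non-attacked edge $\delta^{\theta}_{ij}=0$ forces $\tilde{w}_{ij}=p_{ij}(\theta_{ij})=w_{ij}$, hence $\delta^{w}_{k}=0$ there; collecting the remaining deviations into the diagonal matrix $\Delta^{W} = \diag{k=1}{|\Emc_{\Delta}|}{\delta^{w}_{k}}$ places us exactly in the setting of Lemma \ref{thm:effective_resistance_m}.

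The key technical step is to bound $\left\|\Delta^{W}\right\|$ by $\left\|\delta^{\theta}\right\|_{\infty}$. As $\Delta^{W}$ is diagonal, $\left\|\Delta^{W}\right\| = \max_{(u,v)\in\Emc_{\Delta}}|\delta^{w}_{uv}|$. Differentiability of $p_{uv}$ (Asm. \ref{asm:3properties}\textit{(iii)}) lets me invoke the mean value theorem on the scalar map $\theta_{uv} \mapsto p_{uv}(\theta_{uv})$: for each $(u,v)$ there exists $\xi_{uv}$ between $\theta_{uv}$ and $\theta_{uv}+\delta^{\theta}_{uv}$ with $\delta^{w}_{uv} = p'_{uv}(\xi_{uv})\,\delta^{\theta}_{uv}$. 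The global derivative bound $|p'_{uv}(\cdot)| \leq K_{uv} \leq K_{\Delta}$ then gives $|\delta^{w}_{uv}| \leq K_{\Delta}|\delta^{\theta}_{uv}| \leq K_{\Delta}\left\|\delta^{\theta}\right\|_{\infty}$, whence $\left\|\Delta^{W}\right\| \leq K_{\Delta}\left\|\delta^{\theta}\right\|_{\infty}$.

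Finally, I would combine this inequality with the hypothesis $\left\|\delta^{\theta}\right\|_{\infty} < \rho^{\theta}_{\Delta} = (K_{\Delta}\,\Rmc_{\Emc_{\Delta}}(\Gmc))^{-1}$ to obtain $\left\|\Delta^{W}\right\| \leq K_{\Delta}\left\|\delta^{\theta}\right\|_{\infty} < \Rmc_{\Emc_{\Delta}}(\Gmc)^{-1}$, so that the induced diagonal perturbation $\Delta^{W}$ meets the sufficient condition \eqref{eq:uncguar_multi}. Lemma \ref{thm:effective_resistance_m} then certifies that \eqref{eq:perturbed_cons_dyn}, and hence \eqref{eq:perturbedSBDC}, reaches consensus, completing the argument.

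The main obstacle I anticipate is the dependence of the induced $\Delta^{W}$ on the nominal codeword $\theta$ through the intermediate point $\xi_{uv}$, whereas the conclusion must hold \emph{independently of} $\theta \in \Theta$. This is resolved precisely by the \emph{global} validity of the derivative bound in Asm. \ref{asm:3properties}\textit{(iii)} — $|p'_{uv}|$ is dominated by $K_{uv}$ at every argument — so the estimate is uniform in $\theta$; concavity (Asm. \ref{asm:3properties}\textit{(ii)}) reinforces this by rendering $p'_{uv}$ monotone, guaranteeing that $K_{\Delta}$ governs the worst case. A second subtlety is that $\delta^{w}_{uv}$ may be negative, so an individual perturbed weight could fail to remain positive; this does not break the reduction, since Lemma \ref{thm:effective_resistance_m} — via the cut-set quantity $\Rmc_{\Emc_{\Delta}}(\Gmc)$ — already certifies $L(\Gmc_{\Delta^{W}}) \succeq 0$ with a simple zero eigenvalue, and thus connectedness of $\Gmc_{\Delta^{W}}$, under the very same norm bound.
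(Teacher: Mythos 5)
Your proposal is correct and follows essentially the same route as the paper: reduce \eqref{eq:perturbedSBDC} to the perturbed protocol \eqref{eq:perturbed_cons_dyn} with induced diagonal deviation $\delta^{w}_{uv} = p_{uv}(\theta_{uv}+\delta^{\theta}_{uv})-p_{uv}(\theta_{uv})$, bound $\|\Delta^{W}\| \leq K_{\Delta}\|\delta^{\theta}\|_{\infty}$, and invoke Lemma \ref{thm:effective_resistance_m}. Your mean-value-theorem step is just an explicit unpacking of the Lipschitz bound $|\delta^{w}_{uv}| \leq K_{uv}|\delta^{\theta}_{uv}|$ that the paper's proof uses directly, so there is no substantive difference.
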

\begin{proof}
	Similarly to the single edge attack case, Asm. \ref{asm:3properties} brings to each ordered logical step to conclude the thesis through Lem. \ref{thm:effective_resistance_m}. 
	Indeed, \textit{(i)-(iii)} lead to the determination of quantity
	$K_{ij}|\delta^{\theta}_{ij}|$, which can be seen as the maximum magnitude of an additive perturbation $\delta^{w}_{ij} := p_{ij}(\theta_{ij} + \delta^{\theta}_{ij})-p_{ij}(\theta_{ij})$ 
	affecting each $p_{ij}$, $\forall (i,j) \in \Emc$, independently from the transmitted codeword $\theta$. Consequently, the fact that $|\delta_{uv}^{w}| \leq K_{uv}|\delta_{uv}^{\theta}|$ holds for each edge $(u,v) \in \Emc_{\Delta}$ implies that the following chain of inequalities is verified:
	\begin{equation}\label{eq:double_conservatism}
		\left\|\delta^{w}\right\|_{\infty} \leq   \max_{(u,v) \in \Emc_{\Delta}} \left\lbrace K_{uv} |\delta^{\theta}_{uv}| \right\rbrace \leq K_{\Delta} \left\| \delta^{\theta} \right\|_{\infty}.
	\end{equation}
	Therefore, imposing inequality $K_{\Delta} \left\| \delta^{\theta} \right\|_{\infty} < \Rmc^{-1}_{\Emc_{\Delta}}(\Gmc) $, in accordance with \eqref{eq:uncguar_multi}, leads to the thesis.
\end{proof}

We notice that the leftmost inequality in \eqref{eq:double_conservatism} translates into an essential conservatism on the attack magnitude w.r.t. \eqref{eq:uncguar_multi}, similarly to the case $|\Emc_{\Delta}|=1$. This can be modulated by choosing constants $K_{ij}$ properly. 
Nonetheless, as soon as $|\Emc_{\Delta}|>1$ holds, another essential conservatism that hinges on the attack's scale $|\Emc_{\Delta}|$ may arise w.r.t. its one-dimensional version. In the sequel, this concept is referred to as the $\Emc_{\Delta}$-\textit{gap}, or \textit{resilience gap}, since it depends to the fact that, in general, one has $\Rmc_{\Emc_{\Delta}}(\Gmc) \geq \max_{(u,v)\in\Emc_{\Delta}} \left\lbrace  \Rmc_{\left\lbrace (u,v)\right\rbrace}(\Gmc) \right\rbrace$ (see discussion in Sec. \ref{conserv_analysis} for more details). 
	
Concerning, instead, the discrete-time guarantees for system \eqref{eq:perturbedSBDCdt} provided in \cite[Theorem VI.1, Corollary VI.1]{FabrisZelazo2022}, the following generalization can be made. 
\begin{cor} \label{thm:maximum_perturb_codeword_dt_multi}
	Assume that the characterization for objective decoding functions $p_i$ in Asm. \ref{asm:3properties} holds.
	Denote respectively with $\bar{w}_{i} = \sum_{j \in \Nmc_{i}} |w_{ij}|$ and $\Psi_{\Gmc} = \max_{i=1,\ldots,n} \left\lbrace \bar{w}_{i} \right\rbrace$ the weighted degree of the $i$-th node and the maximum weighted degree of the underlying graph $\Gmc$.
	Let a structured injection attack $\delta^{\theta} \in \BDelta^{\theta}$ affect all edges in $\Emc_{\Delta}$ and define
	\begin{align}
		\psi_{i}(\delta^\theta_{uv}) &= \bar{w}_{i}+K_{uv}|\delta^\theta_{uv}|, \quad \forall (u,v) \in \Emc_{\Delta}, ~i=u,v. \label{eq:psis} 
	\end{align}
	Then the perturbed consensus protocol \eqref{eq:perturbedSBDCdt} is stable for all $\delta^\theta$ such that both \eqref{eq:sgrepatt_multi} and
	\begin{equation}\label{eq:sgrepatt_dt_multi} 
		\phi_{\Gmc}(\delta^\theta) := \max \left\lbrace  \Psi_{\Gmc}, \max_{i\in\left\lbrace u,v \right\rbrace, (u,v)\in \Emc_{\Delta}} \psi_{i}(\delta^\theta_{uv}) \right\rbrace   <  \epsilon^{-1}
	\end{equation}
	hold for any fixed $\epsilon$, independently from the values taken by any codeword $\theta \in \Theta$. 
\end{cor}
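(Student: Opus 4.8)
The plan is to reduce the discrete-time claim to a spectral statement about the perturbed Laplacian $\tilde L := L(\Gmc_{\Delta^W}) = E(W+\Delta^W)E^{\top}$, where the induced additive edge perturbations are $\delta^{w}_{uv} = p_{uv}(\theta_{uv}+\delta^{\theta}_{uv}) - p_{uv}(\theta_{uv})$ for $(u,v)\in\Emc_\Delta$ and $\delta^{w}_{ij}=0$ otherwise, exactly as in the proof of Theorem \ref{thm:maximum_perturb_codeword_multi}. Since $\H(\x)\p(\theta)=\L(\Gmc)\x$ by Lemma \ref{lem:standardandsecurepersp}, the perturbed dynamics \eqref{eq:perturbedSBDCdt} is the linear iteration $\x(t+1) = \big((I_{n} - \epsilon\tilde L)\kronecker I_{D}\big)\x(t)$, and reaching agreement is equivalent to requiring every eigenvalue $1-\epsilon\lambda_i^{\tilde L}$ of $I_{n}-\epsilon\tilde L$ to lie strictly inside the unit disk for $i\ge 2$, while the noncontracting modes span the agreement set $\Amc$. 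The two hypotheses \eqref{eq:sgrepatt_multi} and \eqref{eq:sgrepatt_dt_multi} are then shown to control, respectively, the bottom and the top of the spectrum of $\tilde L$.

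For the bottom of the spectrum, I would invoke Theorem \ref{thm:maximum_perturb_codeword_multi} directly: condition \eqref{eq:sgrepatt_multi} is precisely its hypothesis, so under Assumption \ref{asm:3properties} the perturbed graph $\Gmc_{\Delta^W}$ remains connected and $\tilde L \succeq 0$ with a simple zero eigenvalue, i.e. $0 = \lambda_1^{\tilde L} < \lambda_2^{\tilde L} \le \cdots \le \lambda_n^{\tilde L}$. Hence the eigenvalue $1$ of $I_{n}-\epsilon\tilde L$ is simple and associated with $\angb{\ones_{n}}$, so after tensoring with $I_{D}$ the only noncontracting modes are those spanning $\Amc$, leaving solely the requirement $0 < \epsilon < 2/\lambda_n^{\tilde L}$ for the remaining eigenvalues to be contracting, in analogy with the nominal interval $(0,2/\lambda_n^{L})$ of \eqref{eq:Consdt} from \cite{FabrisMichielettoCenedese2022}.

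For the top of the spectrum, I would bound $\lambda_n^{\tilde L}$ by the Gershgorin circle theorem. Each diagonal entry of $\tilde L$ equals the perturbed weighted degree $\tilde d_{i} = \sum_{j\in\Nmc_{i}}(w_{ij}+\delta^{w}_{ij})$ and the associated disc radius equals $\sum_{j\in\Nmc_{i}}|w_{ij}+\delta^{w}_{ij}|$, so that $\lambda_n^{\tilde L}\le 2\max_{i}\big(\bar w_{i} + \sum_{(i,j)\in\Emc_\Delta}|\delta^{w}_{ij}|\big)$. Using Assumption \ref{asm:3properties}\textit{(iii)}, namely $|\delta^{w}_{ij}|\le K_{ij}|\delta^{\theta}_{ij}|$, the perturbed degree at an endpoint $i\in\{u,v\}$ of an attacked edge $(u,v)$ is bounded by $\psi_{i}(\delta^{\theta}_{uv})$ of \eqref{eq:psis}, while every unattacked node contributes at most $\Psi_{\Gmc}$; hence $\lambda_n^{\tilde L}\le 2\phi_{\Gmc}(\delta^{\theta})$. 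Condition \eqref{eq:sgrepatt_dt_multi}, i.e. $\phi_{\Gmc}(\delta^{\theta})<\epsilon^{-1}$, then yields $\epsilon < 1/\phi_{\Gmc}(\delta^{\theta}) \le 2/\lambda_n^{\tilde L}$, which together with the simple zero eigenvalue closes the argument: all nonunit eigenvalues of $I_{n}-\epsilon\tilde L$ are strictly contractive and \eqref{eq:perturbedSBDCdt} reaches agreement for every $\theta\in\Theta$.

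I expect the main obstacle to be making the Gershgorin step fully rigorous. Two subtleties arise: first, the disc radius coincides with the diagonal only while the perturbed weights stay nonnegative, so one must either argue that \eqref{eq:sgrepatt_multi} already forces $w_{ij}+\delta^{w}_{ij}>0$ on the attacked edges or carry the coarser bound $|w_{ij}+\delta^{w}_{ij}|\le |w_{ij}|+K_{ij}|\delta^{\theta}_{ij}|$ through the estimate; second, the per-edge quantity $\psi_{i}$ in \eqref{eq:psis} tacitly treats each attacked endpoint as incident to a single tampered link, so if a node lies on several edges of $\Emc_\Delta$ the perturbed degree must be aggregated over all of them, and I would need to confirm that $\phi_{\Gmc}$ still upper-bounds $\max_{i}\tilde d_{i}$ in that case (or restrict to the single-incidence setting). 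The remaining steps — the equivalence between the spectral condition and convergence, and the reuse of Theorem \ref{thm:maximum_perturb_codeword_multi} — are routine.
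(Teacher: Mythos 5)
Your overall route---rewriting \eqref{eq:perturbedSBDCdt} as the linear iteration driven by the perturbed Laplacian $\tilde{L}=E(W+\Delta^{W})E^{\top}$, reusing Thm.~\ref{thm:maximum_perturb_codeword_multi} via \eqref{eq:sgrepatt_multi} to get the simple zero eigenvalue, and then using a weighted-degree/Gershgorin bound so that \eqref{eq:sgrepatt_dt_multi} forces $\epsilon\lambda_{n}^{\tilde{L}}<2$---is exactly the intended generalization of Theorem VI.1 of \cite{FabrisZelazo2022}, which is all the paper's own (one-sentence) proof invokes. However, your argument has a genuine unresolved gap at its key quantitative step, and it is precisely the step where the multi-attack case differs from the single-attack one. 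The chain $\lambda_{n}^{\tilde{L}}\leq 2\max_{i}\tilde{d}_{i}\leq 2\phi_{\Gmc}(\delta^{\theta})$ is not established with $\psi_{i}$ as defined in \eqref{eq:psis}: each $\psi_{i}(\delta^{\theta}_{uv})$ charges node $i$ with only \emph{one} attacked edge, so whenever two or more edges of $\Emc_{\Delta}$ share the endpoint $i$ the perturbed (absolute) weighted degree is $\bar{w}_{i}+\sum_{j:\,(i,j)\in\Emc_{\Delta}}K_{ij}|\delta^{\theta}_{ij}|$, which strictly exceeds $\phi_{\Gmc}(\delta^{\theta})$ (take the path $1$--$2$--$3$ with both edges attacked and both deviations positive). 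Hence the second link of the chain fails, and whether $\lambda_{n}^{\tilde{L}}\leq 2\phi_{\Gmc}(\delta^{\theta})$ (and with it the corollary) survives can only be decided by bringing \eqref{eq:sgrepatt_multi} into the eigenvalue estimate, something your proposal does not do; you flag the issue yourself but leave it ``to confirm or restrict,'' which means the proof is incomplete exactly at the genuinely new point. Your first subtlety is also real and cannot be dismissed: \eqref{eq:sgrepatt_multi} does not keep the perturbed weights nonnegative (on a complete graph $\Rmc_{\Emc_{\Delta}}^{-1}(\Gmc)$ exceeds the nominal unit weights), so one must work with $\sum_{j}|w_{ij}+\delta^{w}_{ij}|$ rather than the diagonal entry, which again produces the per-node aggregated sum and runs into the same mismatch with \eqref{eq:psis}.

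For comparison, the paper supplies none of these details: its proof of Cor.~\ref{thm:maximum_perturb_codeword_dt_multi} is a deferral to the single-edge proof in \cite{FabrisZelazo2022}, remarking only that $\phi_{\Gmc}(\delta^{\theta})$ is combinatorially harder to compute. So your reconstruction follows the intended approach (and is more explicit than the paper), but as a standalone proof it stops short where the ``direct generalization'' is not automatic; to close it you would need either to aggregate the attacked-edge terms per node in \eqref{eq:psis} (which modifies the stated condition \eqref{eq:sgrepatt_dt_multi}) or to show that the perturbation bound \eqref{eq:sgrepatt_multi} caps the spectral increase of $\tilde{L}$ enough to compensate for the undercounting at shared endpoints.
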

\begin{proof}
	The proof is a generalization of that for Theorem VI.1 in \cite{FabrisZelazo2022}, with the only difference that $\phi_{\Gmc}(\delta^\theta)$ in \eqref{eq:sgrepatt_dt_multi} is harder to be computed w.r.t. its single-edge-attack version $\phi_{\Gmc}(\delta^\theta_{uv})$ because the combinatorial search space for the maximization of quantities $\psi_{i}(\delta^\theta_{uv})$ defined in \eqref{eq:psis} is, in general, larger in this kind of multiple-attack scenario.
\end{proof}

\begin{cor} \label{cor:maximum_perturb_codeword_dt_multi}
	Under all the assumptions adopted in Cor. \ref{thm:maximum_perturb_codeword_dt_multi}, defining $K_{\Delta}$ as in \eqref{eq:K_Delta}, $\Rmc_{\Emc_{\Delta}}(\Gmc)$ as in \eqref{eq:uncguar_multi} and setting $\epsilon < \Psi_{\Gmc}^{-1}$, the perturbed consensus protocol \eqref{eq:perturbedSBDCdt} is stable for all $\delta^\theta$ such that
	\begin{equation}\label{eq:sgrepatt_dt_simple_multi}
		\left\|\delta^{\theta}\right\|_{\infty} < \rho^{\theta}_{\Delta} =  K_{\Delta}^{-1} \min \{ \Rmc_{\Emc_{\Delta}}^{-1}(\Gmc) , (\epsilon^{-1}-\Psi_{\Gmc}) \}
	\end{equation}
	independently from the values taken by any codeword $\theta \in \Theta$. In particular, if $\epsilon$ is selected as follows
	\begin{equation}\label{eq:epsguar_rs_multi}
		\epsilon \leq \epsilon_{\Delta}^{\star} := (\Psi_{\Gmc} + \Rmc_{\Emc_{\Delta}}^{-1}(\Gmc))^{-1}
	\end{equation}
	then $ \rho^{\theta}_{\Delta} $ in \eqref{eq:sgrepatt_dt_simple_multi} is maximized as $\epsilon$ varies and condition \eqref{eq:sgrepatt_multi} needs to be fulfilled solely to guarantee robust stability.
\end{cor}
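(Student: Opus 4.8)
The plan is to derive the single sufficient condition \eqref{eq:sgrepatt_dt_simple_multi} directly from the two requirements \eqref{eq:sgrepatt_multi} and \eqref{eq:sgrepatt_dt_multi} already established in Cor. \ref{thm:maximum_perturb_codeword_dt_multi}, and then to optimize the admissible perturbation radius $\rho^{\theta}_{\Delta}$ over the step size $\epsilon$. The essential idea is that \eqref{eq:sgrepatt_dt_multi} involves a maximization over edges that I can bound crudely but tightly enough by the global constants $K_{\Delta}$ and $\Psi_{\Gmc}$, collapsing it into a scalar inequality on $\|\delta^{\theta}\|_{\infty}$.

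First I would bound $\phi_{\Gmc}(\delta^{\theta})$ from above in terms of $\|\delta^{\theta}\|_{\infty}$. Since $\bar{w}_{i} \leq \Psi_{\Gmc}$ for every node $i$ by definition of $\Psi_{\Gmc}$, and since $K_{uv} \leq K_{\Delta}$ together with $|\delta^{\theta}_{uv}| \leq \|\delta^{\theta}\|_{\infty}$ for every $(u,v)\in\Emc_{\Delta}$, each term in \eqref{eq:psis} satisfies $\psi_{i}(\delta^{\theta}_{uv}) = \bar{w}_{i}+K_{uv}|\delta^{\theta}_{uv}| \leq \Psi_{\Gmc}+K_{\Delta}\|\delta^{\theta}\|_{\infty}$. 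Taking the maximum over $i$ and $(u,v)$, and noting $K_{\Delta}\|\delta^{\theta}\|_{\infty}\geq 0$, yields $\phi_{\Gmc}(\delta^{\theta}) \leq \Psi_{\Gmc}+K_{\Delta}\|\delta^{\theta}\|_{\infty}$. Hence \eqref{eq:sgrepatt_dt_multi} is guaranteed whenever $\Psi_{\Gmc}+K_{\Delta}\|\delta^{\theta}\|_{\infty} < \epsilon^{-1}$, i.e. whenever $\|\delta^{\theta}\|_{\infty} < K_{\Delta}^{-1}(\epsilon^{-1}-\Psi_{\Gmc})$; the right-hand side is positive precisely because $\epsilon<\Psi_{\Gmc}^{-1}$.

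Next I would combine this with \eqref{eq:sgrepatt_multi}, which reads $\|\delta^{\theta}\|_{\infty} < K_{\Delta}^{-1}\Rmc_{\Emc_{\Delta}}^{-1}(\Gmc)$. Both hold simultaneously if and only if $\|\delta^{\theta}\|_{\infty}$ is smaller than the minimum of the two right-hand sides, which is exactly $\rho^{\theta}_{\Delta}=K_{\Delta}^{-1}\min\{\Rmc_{\Emc_{\Delta}}^{-1}(\Gmc),\,\epsilon^{-1}-\Psi_{\Gmc}\}$ of \eqref{eq:sgrepatt_dt_simple_multi}; invoking Cor. \ref{thm:maximum_perturb_codeword_dt_multi} then gives stability and agreement for every admissible $\delta^{\theta}$, independently of $\theta$.

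Finally, for the optimality claim I would view $\rho^{\theta}_{\Delta}$ as a function of $\epsilon$. The term $\Rmc_{\Emc_{\Delta}}^{-1}(\Gmc)$ is independent of $\epsilon$, so it caps the minimum, whereas $\epsilon^{-1}-\Psi_{\Gmc}$ is strictly decreasing in $\epsilon$ on $(0,\Psi_{\Gmc}^{-1})$. Thus $\rho^{\theta}_{\Delta}$ attains its largest possible value $K_{\Delta}^{-1}\Rmc_{\Emc_{\Delta}}^{-1}(\Gmc)$ exactly when $\epsilon^{-1}-\Psi_{\Gmc}\geq\Rmc_{\Emc_{\Delta}}^{-1}(\Gmc)$, i.e. when $\epsilon\leq(\Psi_{\Gmc}+\Rmc_{\Emc_{\Delta}}^{-1}(\Gmc))^{-1}=\epsilon_{\Delta}^{\star}$ as in \eqref{eq:epsguar_rs_multi}; for such $\epsilon$ the active term in the minimum is $\Rmc_{\Emc_{\Delta}}^{-1}(\Gmc)$, so \eqref{eq:sgrepatt_dt_simple_multi} collapses to \eqref{eq:sgrepatt_multi} and only the latter must be verified. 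I do not expect a genuine obstacle here; the only care required is the bookkeeping that $\epsilon_{\Delta}^{\star}$ automatically satisfies $\epsilon<\Psi_{\Gmc}^{-1}$ (immediate since $\Rmc_{\Emc_{\Delta}}^{-1}(\Gmc)>0$) and that the monotonicity of $\epsilon^{-1}-\Psi_{\Gmc}$ makes the threshold on $\epsilon$ an upper bound rather than a lower one.
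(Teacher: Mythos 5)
Your proposal is correct and follows essentially the same route as the paper's proof: bounding $\phi_{\Gmc}(\delta^\theta) \leq \Psi_{\Gmc} + K_{\Delta}\left\|\delta^{\theta}\right\|_{\infty}$, combining the resulting condition with \eqref{eq:sgrepatt_multi} via the minimum in \eqref{eq:sgrepatt_dt_simple_multi}, and obtaining \eqref{eq:epsguar_rs_multi} by enforcing $\Rmc_{\Emc_{\Delta}}^{-1}(\Gmc) \leq \epsilon^{-1}-\Psi_{\Gmc}$ to maximize $\rho^{\theta}_{\Delta}$. You simply spell out the edge-wise bookkeeping behind the upper bound on $\phi_{\Gmc}$ in more detail than the paper does.
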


\begin{proof}
	Relation in \eqref{eq:sgrepatt_dt_simple_multi} is the combined result of guarantee in \eqref{eq:sgrepatt_multi} and that one obtainable imposing $\Psi_{\Gmc} + K_{\Delta}\left\|\delta^{\theta}\right\|_{\infty} < \epsilon^{-1}$ to satisfy \eqref{eq:sgrepatt_dt_multi}, since $\phi_{\Gmc}(\delta^\theta)$ can be upper bounded as $\phi_{\Gmc}(\delta^\theta) \leq \Psi_{\Gmc} + K_{\Delta}\left\|\delta^{\theta}\right\|_{\infty}$. On the other hand, relation \eqref{eq:epsguar_rs_multi} is derived enforcing $\Rmc_{\Emc_{\Delta}}^{-1}(\Gmc) \leq \epsilon^{-1}-\Psi_{\Gmc}$ with the purpose to maximize $\rho_{\Delta}^{\theta}$ as $\epsilon$ varies. 
\end{proof}

We conclude this subsection with the following remark.
\begin{rmk}
	Observe that no additional assumptions on the decoding functions $p_{i}$ w.r.t. \textit{(i)-(iii)} given in Subsec. \ref{ssec:modelchanneltampmulti} are required to solve Problem \ref{prob:resilenceSBDC}, i.e., to generalize the guarantees already yielded in \cite{FabrisZelazo2022} to a multiple-attack scenario.
\end{rmk}

\subsection{Analysis of the resilience gap}\label{conserv_analysis}

In light of the previous theoretical results, we discuss here how decoding functions can be seen as a useful tool to compensate against the resilience gap (the $\Emc_{\Delta}$-\textit{gap}).
This analysis starts by recalling the following preliminary proposition.

\begin{prop}[{\cite[Proposition V.3]{ZelazoBurger2017}}] \label{prop:V3}
	$~$\\ For any weighted undirected graph $\Gmc$ it holds that
	\begin{equation}\label{eq:V3_inequalities} 
	  	\Rmc_{\Emc_{\Delta}}^{\star}(\Gmc) \leq 
	  	\Rmc_{\Emc_{\Delta}}(\Gmc) \leq 
	  	\Rmc_{\Emc_{\Delta}}^{tot}(\Gmc),
	\end{equation}
	where $\Rmc_{\Emc_{\Delta}}(\Gmc)$ is defined as in \eqref{eq:uncguar_multi},
	\begin{equation}\label{eq:V3_r_one_edge}
		\Rmc_{\Emc_{\Delta}}^{\star}(\Gmc) = \underset{(u,v) \in \Emc_{\Delta}}{\max} \left\lbrace \Rmc_{\left\lbrace (u,v) \right\rbrace}(\Gmc) \right\rbrace
	\end{equation}
	and $\Rmc_{\Emc_{\Delta}}^{tot}(\Gmc) = \mathrm{tr} ( P^{\top} R_{(\Tmc,\Cmc)}^{\top} (R_{(\Tmc,\Cmc)} W R_{(\Tmc,\Cmc)}^{\top})^{-1} R_{(\Tmc,\Cmc)} P   )$.
\end{prop}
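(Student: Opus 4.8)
The plan is to recognize all three quantities in \eqref{eq:V3_inequalities} as spectral data of a single symmetric positive semidefinite matrix and then invoke elementary eigenvalue inequalities. Writing $R := R_{(\Tmc,\Cmc)}$ for brevity, define
\[
	M := P^{\top} R^{\top} (R W R^{\top})^{-1} R P \in \Rset{|\Emc_{\Delta}| \times |\Emc_{\Delta}|}.
\]
Since $W$ is diagonal with positive entries and $R = [\, I_{\tau} \ T_{(\Tmc,\Cmc)} \,]$ has full row rank, the matrix $R W R^{\top}$ is symmetric positive definite, hence so is its inverse. Setting $B := R P$ we obtain $M = B^{\top} (R W R^{\top})^{-1} B \succeq 0$, so $M$ is symmetric positive semidefinite and all its eigenvalues are nonnegative.

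First I would identify the three terms with the eigenvalues of $M$. Because $M \succeq 0$, its spectral norm equals its largest eigenvalue, so $\Rmc_{\Emc_{\Delta}}(\Gmc) = \left\| M \right\| = \lambda_{\max}(M)$, and trivially $\Rmc_{\Emc_{\Delta}}^{tot}(\Gmc) = \mathrm{tr}(M) = \sum_{i} \lambda_{i}(M)$. The key observation for the lower bound is that the single-edge quantity $\Rmc_{\left\lbrace (u,v) \right\rbrace}(\Gmc)$ equals the diagonal entry of $M$ associated to edge $(u,v)$: restricting the selection matrix to the single column $\e_{k}$ that picks edge $(u,v) \in \Emc_{\Delta}$ yields $\Rmc_{\left\lbrace (u,v) \right\rbrace}(\Gmc) = \e_{k}^{\top} M \e_{k} = [M]_{kk}$, whence $\Rmc_{\Emc_{\Delta}}^{\star}(\Gmc) = \max_{k} [M]_{kk}$.

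With these identifications the two inequalities reduce to textbook facts for symmetric positive semidefinite matrices. For the lower bound, the Rayleigh-quotient characterization gives $[M]_{kk} = \e_{k}^{\top} M \e_{k} \leq \lambda_{\max}(M)$ for every $k$, so the maximum diagonal entry is at most $\lambda_{\max}(M)$, i.e. $\Rmc_{\Emc_{\Delta}}^{\star}(\Gmc) \leq \Rmc_{\Emc_{\Delta}}(\Gmc)$. For the upper bound, nonnegativity of the eigenvalues gives $\lambda_{\max}(M) \leq \sum_{i} \lambda_{i}(M) = \mathrm{tr}(M)$, i.e. $\Rmc_{\Emc_{\Delta}}(\Gmc) \leq \Rmc_{\Emc_{\Delta}}^{tot}(\Gmc)$.

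The only nonroutine step is the bookkeeping that ties $\Rmc_{\left\lbrace (u,v) \right\rbrace}(\Gmc)$ to the diagonal of $M$; once the columns of the selection matrix are matched to the edges of $\Emc_{\Delta}$ this is immediate, and the remaining eigenvalue bounds are standard. I therefore expect no genuine obstacle, though some care is warranted to confirm that the single-edge effective resistance obtained from \eqref{eq:uncguar_multi} with $|\Emc_{\Delta}| = 1$ coincides with the corresponding diagonal block of the full matrix rather than picking up off-diagonal contributions.
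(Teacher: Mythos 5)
Your proof is correct. Note that the paper itself does not prove this proposition: it is imported verbatim from \cite[Proposition V.3]{ZelazoBurger2017}, so there is no in-paper argument to compare against. Your route is the natural self-contained one: with $M := P^{\top} R_{(\Tmc,\Cmc)}^{\top} (R_{(\Tmc,\Cmc)} W R_{(\Tmc,\Cmc)}^{\top})^{-1} R_{(\Tmc,\Cmc)} P \succeq 0$ (which uses the paper's standing assumption $w_{ij}>0$ and the full row rank of $R_{(\Tmc,\Cmc)}$), the three quantities are exactly the maximum diagonal entry, the largest eigenvalue, and the trace of $M$, and the chain $\max_k [M]_{kk} \leq \lambda_{\max}(M) \leq \tr(M)$ is standard for symmetric positive semidefinite matrices. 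Your bookkeeping step identifying $\Rmc_{\{(u,v)\}}(\Gmc)$ with the corresponding diagonal entry of $M$ is also right, since the single-edge selection matrix in \eqref{eq:uncguar_multi} is precisely the column of $P$ associated to that edge and the resulting $1\times 1$ matrix equals its own spectral norm.
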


Prop. \ref{prop:V3} suggests us to define the $\Emc_{\Delta}$-\textit{gap} as
\begin{equation}\label{eq:Edeltagap}
	g(\Gmc,\Emc_{\Delta}) = 1-\Rmc_{\Emc_{\Delta}}^{\star}(\Gmc) / \Rmc_{\Emc_{\Delta}}(\Gmc) ,
\end{equation} 
with $\Rmc_{\Emc_{\Delta}}^{\star}(\Gmc)$ and $\Rmc_{\Emc_{\Delta}}(\Gmc)$ defined by \eqref{eq:V3_r_one_edge} and \eqref{eq:uncguar_multi}.
Indeed, the emerging conservatism related to the fact that multiple edges may be under attack ($|\Emc_{\Delta}|>1$) grows as the value of $g(\Gmc,\Emc_{\Delta}) \in  [0,1)$ increases.
Also, exploiting \eqref{eq:Edeltagap}, inequality \eqref{eq:sgrepatt_multi} can be rewritten as 
\begin{equation}\label{eq:sgrepatt_mono_multi}
	\left\| \delta^{\theta}\right\|_{\infty} < (1-g(\Gmc, \Emc_{\Delta})) /(K_{\Delta} \Rmc^{\star}_{\Emc_{\Delta}}(\Gmc) ),
\end{equation}
so that the quantity $\rho_{\Delta^{\star}}^{\theta} := (K_{\Delta} \Rmc^{\star}_{\Emc_{\Delta}}(\Gmc) )^{-1}$ can be seen as the maximum value taken by $\rho_{\Delta}^{\theta}$ in \eqref{eq:sgrepatt_multi} as $\Gmc$ and $\Emc_{\Delta}$ vary. 
In particular, if $g(\Gmc,\Emc_{\Delta})=0$ holds (indicating the absence of this kind of conservatism), there exists an edge $(u^{\star},v^{\star})\in \Emc_{\Delta}$ for which, by \eqref{eq:sgrepatt_mono_multi}, the network $\Gmc$ is robustly stable for all perturbations $\delta^{\theta}$ such that 
\begin{equation}\label{eq:sgrepatt_mono}
	\left\| \delta^{\theta}\right\|_{\infty}   < (K_{\Delta}  \Rmc_{\left\lbrace (u^{\star},v^{\star})\right\rbrace}(\Gmc))^{-1} = \rho_{\Delta^{\star}}^{\theta}.
\end{equation}
Remarkably, inequality \eqref{eq:sgrepatt_mono} expresses the same guarantee provided by \eqref{eq:sgrepatt_multi} as if $\Emc_{\Delta}=\left\lbrace (u^{\star},v^{\star}) \right\rbrace$ was assigned, similarly to the one-dimensional case $|\Emc_{\Delta}| = 1$ debated in \cite{FabrisZelazo2022}. Since the case where only one edge is attacked clearly represents a scenario in which the conservatism due to $|\Emc_{\Delta}| > 1$ is lacking, 
if the $\Emc_{\Delta}$-gap corresponding to a given setup $(\Gmc,\Emc_{\Delta})$ vanishes then the robustness of $\Gmc$ subject to perturbations $\delta^{\theta}$ striking the subset $\Emc_{\Delta}$ is maximized. 
The latter observation leads us to wonder which are all the possible circumstances where $g(\Gmc,\Emc_{\Delta}) = 0$ is satisfied? A partial yet extensive answer is given by the next proposition. 
\begin{prop} \label{prop:gap=0}
	Let $\Gmc=(\Vmc,\Emc,\Wmc)$, $\Emc_{\Delta} \subseteq \Emc$ the subset of perturbed edges in $\Emc$, and $g(\Gmc,\Emc_{\Delta})$ the $\Emc_{\Delta}$-gap  \eqref{eq:Edeltagap}. 
	Then  $g(\Gmc,\Emc_{\Delta}) = 0$ if either one of the following conditions hold:	i) $~|\Emc_{\Delta}| = 1$, or ii) $~2 \leq |\Emc_{\Delta}| \leq n-1 = |\Emc |$.
\end{prop}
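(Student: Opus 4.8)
The plan is to read the gap through its definition \eqref{eq:Edeltagap}: because the left inequality $\Rmc^{\star}_{\Emc_{\Delta}}(\Gmc) \leq \Rmc_{\Emc_{\Delta}}(\Gmc)$ in \eqref{eq:V3_inequalities} holds for every weighted undirected graph, the condition $g(\Gmc,\Emc_{\Delta}) = 0$ is equivalent to the equality $\Rmc^{\star}_{\Emc_{\Delta}}(\Gmc) = \Rmc_{\Emc_{\Delta}}(\Gmc)$, so the whole proof reduces to verifying this equality under either hypothesis. Throughout I would abbreviate the symmetric positive semidefinite matrix appearing in \eqref{eq:uncguar_multi} as $M := R_{(\Tmc,\Cmc)}^{\top} (R_{(\Tmc,\Cmc)} W R_{(\Tmc,\Cmc)}^{\top})^{-1} R_{(\Tmc,\Cmc)} \in \Rset{m \times m}$, so that $\Rmc_{\Emc_{\Delta}}(\Gmc) = \|P^{\top} M P\|$, while selecting a single edge $(u,v)$ via the column $P = \mathrm{col}_{k}[I_{m}]$ yields $\Rmc_{\{(u,v)\}}(\Gmc) = [M]_{kk} \geq 0$, the nonnegativity following from $M \succeq 0$.

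For case i) the argument is essentially definitional: $|\Emc_{\Delta}| = 1$ forces $P$ to be a single standard basis column, so $P^{\top} M P$ is the $1 \times 1$ matrix $[M]_{kk}$ whose spectral norm is that very scalar; since the maximum in \eqref{eq:V3_r_one_edge} runs over a singleton, $\Rmc^{\star}_{\Emc_{\Delta}}(\Gmc) = [M]_{kk} = \Rmc_{\Emc_{\Delta}}(\Gmc)$ and the gap vanishes.

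The substance lies in case ii), whose hypothesis $|\Emc| = n-1$ together with connectivity forces $\Gmc$ to be a tree. The key structural step is that the spanning tree $\Tmc$ then coincides with $\Gmc$ itself, so $\Cmc = \Gmc \setminus \Tmc = \emptyset$, the block $T_{(\Tmc,\Cmc)}$ is empty, and the cut-set matrix collapses to $R_{(\Tmc,\Cmc)} = I_{\tau} = I_{n-1}$. Substituting this into $M$ gives $M = W^{-1} = \diag{k=1}{m}{1/w_k}$, a diagonal positive definite matrix. I would then invoke the elementary fact that a principal submatrix of a diagonal matrix is again diagonal, so that $P^{\top} M P$ is the diagonal matrix collecting the entries $1/w_{k}$ for $k \in \Emc_{\Delta}$, and its spectral norm equals the largest selected diagonal entry $\max_{k \in \Emc_{\Delta}} 1/w_k$. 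Since each single-edge value is $\Rmc_{\{(u,v)\}}(\Gmc) = [M]_{kk} = 1/w_k$, this largest entry is exactly $\Rmc^{\star}_{\Emc_{\Delta}}(\Gmc)$; hence $\Rmc_{\Emc_{\Delta}}(\Gmc) = \Rmc^{\star}_{\Emc_{\Delta}}(\Gmc)$ and $g(\Gmc,\Emc_{\Delta}) = 0$ independently of which $2 \leq |\Emc_{\Delta}| \leq n-1$ edges are attacked.

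The only genuine obstacle I anticipate is recognizing and cleanly justifying the collapse $R_{(\Tmc,\Cmc)} = I_{n-1}$ in the tree case: every downstream consequence, namely the diagonality of $M$ and the identity that the spectral norm of a diagonal positive semidefinite matrix equals its maximal diagonal entry, is routine once this is in place, but it rests entirely on correctly observing that a connected graph on $n$ vertices with exactly $n-1$ edges has an empty cycle-completing block $\Cmc$. A minor but worthwhile check to include is that $M \succeq 0$ makes the relevant diagonal entries nonnegative, so that the spectral-norm computation and the effective-resistance interpretation agree in sign.
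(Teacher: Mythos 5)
Your proof is correct and follows essentially the same route as the paper: case i) is the definitional observation that the maximum in \eqref{eq:V3_r_one_edge} runs over a singleton, and case ii) uses connectivity plus $|\Emc|=n-1$ to conclude $\Gmc$ is a tree, so that $R_{(\Tmc,\Cmc)}=I_{n-1}$, the matrix in \eqref{eq:uncguar_multi} reduces to $W^{-1}$, and $\Rmc_{\Emc_{\Delta}}(\Gmc)=\max_{k:\,e_k\in\Emc_{\Delta}}\{1/w_k\}=\Rmc^{\star}_{\Emc_{\Delta}}(\Gmc)$, which is exactly the paper's computation in \eqref{eq:REDelta_acyclic}. Your additional explicit justifications (the principal-submatrix/diagonality step and the positivity of the weights making the spectral norm equal the largest diagonal entry) are consistent with, and slightly more detailed than, the paper's argument.
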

\begin{proof}
	We prove each point of the statement separately.
	
	i) If $|\Emc_{\Delta}| = 1$ then there exists only one edge $(u^{\star},v^{\star})\in \Emc_{\Delta}$ for which the mid inequality in \eqref{eq:V3_inequalities} must hold tightly, since $\mathcal{R}^{\star}_{\Emc_\Delta}(\mathcal{G}) = \mathcal{R}_{\Emc_\Delta} (\mathcal{G})$. It follows that $g(\Gmc,\Emc_{\Delta}) = 0$.

	ii) In this case, since $\Gmc$ is connected by assumption then it has to be a spanning tree; consequently, it can be also denoted as $\Gmc=\Tmc(\Emc,\Vmc,\Wmc)$. Hence, one has $R_{(\Tmc,\Cmc)} = I_{n-1}$ and, therefore, formula \eqref{eq:uncguar_multi} yields
	\begin{equation}\label{eq:REDelta_acyclic}
		\Rmc_{\Emc_{\Delta}}(\Tmc) = \underset{k:~ e_{k} \in \Emc_{\Delta}}{\max} \left\lbrace | w_{k}^{-1} | \right\rbrace  , \quad \forall \Emc_{\Delta} \subseteq \Emc. 
	\end{equation}
	Observe that \eqref{eq:REDelta_acyclic} also holds for subsets $\Emc_{\Delta}$ such that $|\Emc_{\Delta}|=1$ and the $\max$ over $ \left\lbrace k:~ e_{k} \in \Emc_{\Delta} \right\rbrace$ in \eqref{eq:REDelta_acyclic} is taken exactly as in \eqref{eq:V3_r_one_edge}. We conclude that $g(\Gmc,\Emc_{\Delta}) = 0$ occurs even for $|\Emc_{\Delta}| \geq 2$ whenever $\Gmc=\Tmc(\Emc,\Vmc,\Wmc)$, as the mid inequality in \eqref{eq:V3_inequalities} is tightly satisfied under these conditions.
\end{proof}

To conclude, we note that the $\Emc_{\Delta}$-gap in \eqref{eq:sgrepatt_mono_multi} can be mitigated by an appropriate re-design procedure of   $K_{\Delta}$. 

\begin{prop}\label{prop:Kdeltaprime}
    Denote with $\rho_{\Delta^{\star}}^{\theta}(K_\Delta)$ the value of $\rho_{\Delta^{\star}}^{\theta}$ as $K_\Delta$ varies. Set $K_{\Delta}^{\prime} := (1-g(\Gmc,\Emc_\Delta)) K_{\Delta}$. Then 
    \begin{equation*}
        \rho_{\Delta^{\star}}^{\theta}(K_{\Delta}^{\prime}) > \rho_{\Delta^{\star}}^{\theta}(K_\Delta), \quad \forall K_{\Delta} > 0.
    \end{equation*}
\end{prop}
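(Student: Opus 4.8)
The plan is to reduce the statement to a one-line monotonicity argument in the single variable $K_{\Delta}$. First I would recall the definition $\rho_{\Delta^{\star}}^{\theta}(K_{\Delta}) = (K_{\Delta}\,\Rmc^{\star}_{\Emc_{\Delta}}(\Gmc))^{-1}$ introduced right after \eqref{eq:sgrepatt_mono_multi}, and observe that for a fixed pair $(\Gmc,\Emc_{\Delta})$ the factor $\Rmc^{\star}_{\Emc_{\Delta}}(\Gmc)$ is a strictly positive constant. Its positivity follows from \eqref{eq:V3_r_one_edge}, since it is the maximum over $\Emc_{\Delta}$ of the one-edge effective resistances, each strictly positive for the connected, positively weighted graph $\Gmc$. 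Hence the map $K \mapsto (K\,\Rmc^{\star}_{\Emc_{\Delta}}(\Gmc))^{-1}$ is positive and strictly decreasing on $(0,+\infty)$.

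Next I would simply substitute $K_{\Delta}^{\prime} = (1-g(\Gmc,\Emc_{\Delta}))K_{\Delta}$ and form the ratio of the two values, where the common factor $\Rmc^{\star}_{\Emc_{\Delta}}(\Gmc)$ and one power of $K_{\Delta}$ cancel:
\begin{equation*}
    \frac{\rho_{\Delta^{\star}}^{\theta}(K_{\Delta}^{\prime})}{\rho_{\Delta^{\star}}^{\theta}(K_{\Delta})} = \frac{K_{\Delta}}{K_{\Delta}^{\prime}} = \frac{1}{1-g(\Gmc,\Emc_{\Delta})}.
\end{equation*}
Since the $\Emc_{\Delta}$-gap \eqref{eq:Edeltagap} satisfies $g(\Gmc,\Emc_{\Delta}) \in [0,1)$, the factor $1-g(\Gmc,\Emc_{\Delta})$ lies in $(0,1]$ and the ratio above is at least $1$. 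The strict inequality claimed in the statement then holds exactly when $g(\Gmc,\Emc_{\Delta})>0$ --- that is, precisely in the regime in which a resilience gap is present and the redesign is meaningful; for $g=0$ one gets $K_{\Delta}^{\prime}=K_{\Delta}$ and equality, so I would make the hypothesis $g(\Gmc,\Emc_{\Delta})>0$ explicit in the statement (otherwise the conclusion holds only with $\geq$).

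I do not anticipate any genuinely hard step: the result is nothing more than the scaling $K_{\Delta}\mapsto(1-g)K_{\Delta}$ combined with the monotonicity of $\rho_{\Delta^{\star}}^{\theta}$ in its argument. The only points deserving care are verifying $\Rmc^{\star}_{\Emc_{\Delta}}(\Gmc)>0$ (so the reciprocal is well defined and the map is strictly decreasing) and handling the boundary case $g=0$. To close, and to justify why this particular rescaling is the natural one, I would add a short remark: rearranging \eqref{eq:Edeltagap} gives $\Rmc^{\star}_{\Emc_{\Delta}}(\Gmc) = (1-g(\Gmc,\Emc_{\Delta}))\,\Rmc_{\Emc_{\Delta}}(\Gmc)$, whence the gap-affected guarantee \eqref{eq:sgrepatt_multi} evaluated at $K_{\Delta}^{\prime}$ coincides with the gap-free guarantee at $K_{\Delta}$; thus lowering the Lipschitz budget by the factor $1-g$ exactly recovers the robustness one would enjoy in the absence of the $\Emc_{\Delta}$-gap.
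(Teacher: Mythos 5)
Your argument is correct, and it is the natural one: the paper in fact states Prop.~\ref{prop:Kdeltaprime} without any proof, and the intended justification is exactly your one-line computation, namely $\rho_{\Delta^{\star}}^{\theta}(K_{\Delta}^{\prime}) = \bigl((1-g(\Gmc,\Emc_{\Delta}))K_{\Delta}\,\Rmc^{\star}_{\Emc_{\Delta}}(\Gmc)\bigr)^{-1} = \rho_{\Delta^{\star}}^{\theta}(K_{\Delta})/(1-g(\Gmc,\Emc_{\Delta}))$, together with $g(\Gmc,\Emc_{\Delta})\in[0,1)$ and $\Rmc^{\star}_{\Emc_{\Delta}}(\Gmc)>0$. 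Your closing observation that $\Rmc^{\star}_{\Emc_{\Delta}}(\Gmc)=(1-g(\Gmc,\Emc_{\Delta}))\,\Rmc_{\Emc_{\Delta}}(\Gmc)$, so that the rescaled Lipschitz budget $K_{\Delta}^{\prime}$ makes the gap-affected bound \eqref{eq:sgrepatt_multi} coincide with the gap-free one, is precisely the design rationale the paper alludes to in the surrounding discussion (the tradeoff with encryption capability). You also correctly flag the one genuine defect, which lies in the statement rather than in your proof: when $g(\Gmc,\Emc_{\Delta})=0$ --- which by Prop.~\ref{prop:gap=0} covers single-edge attacks and spanning trees --- one has $K_{\Delta}^{\prime}=K_{\Delta}$ and the claimed strict inequality degenerates to equality, so either the hypothesis $g(\Gmc,\Emc_{\Delta})>0$ should be added or the conclusion weakened to $\geq$; this caveat should accompany the proposition.
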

Prop. \ref{prop:Kdeltaprime} introduces an interesting tradeoff, where conservatism is reduced at the expense of shrinking the image of the decoding function (see also Fact IV.1 in \cite{FabrisZelazo2022}).  This leads to an increasing demand of the encryption capabilities.




\section{Numerical examples} \label{sec:simulations}
We here focus on a potential application of the proposed technique to semi-autonomous networks (SANs) consisting of leader-follower autonomous agents \cite{chapman2015semi}. Let us assume that subset $\Vmc_{l} \subseteq \Vmc$, $\Vmc_{l} \neq \emptyset$, collects all the leader agents of a SAN $\Smc = (\Gmc,\Umc)$, where $\Gmc = (\Vmc,\Emc,\Wmc)$ is an undirected and connected graph representing the network interactions of $\Smc$ and $\Umc = \left\lbrace \uu_{1}(t), \ldots,\uu_{|\Vmc_{l}|}(t) \right\rbrace$, $\uu_{\ell}(t) \in \mathbb{R}^{D}$, denotes the set of external inputs that directly influence each leader agents in $\Vmc_{l}$. Typically, setting $\mathbf{u}(t) = \vec{i \in \Vmc_l}{}{\mathbf{u}_{i}(t)}$, the continuous-time dynamics for SANs are yielded by
\begin{equation}\label{eq:SANct}
	\dot{\x} =  (L_{B}(\Gmc) \otimes I_{D})\x + (B \otimes I_{D}) \uu,
\end{equation}
where $L_{B}(\Gmc) = L(\Gmc) + \mathrm{diag}(B \ones_{|\Vmc_{l}|})$, with $B\in\mathbb{R}^{n \times |\Vmc_{l}|}$ such that $[B]_{i\ell} > 0$, if agent $i$ belongs to the leader set $\Vmc_{l}$; $[B]_{i\ell} = 0$, otherwise. Also, a discrete-time version of \eqref{eq:SANct} can be evidently provided by
\begin{equation}\label{eq:SANdt}
	\x(t+\rev{1}) =  (I_{N} - \epsilon(L_{B}(\Gmc) \otimes I_{D}))\x(t) + (B \otimes I_{D}) \uu(t),
\end{equation}
where the quantity $\epsilon$ preserves the same meaning of \eqref{eq:perturbedSBDCdt}.
The stability of SANs endowed with the discussed security mechanisms can be analyzed by observing that the positive semi-definiteness of $L(\Gmc)$ clearly implies the positive semi-definiteness of $L_{B}(\Gmc)$. Consequently, it is sufficient to ensure Thm. \ref{thm:maximum_perturb_codeword_multi} and Cor. \ref{thm:maximum_perturb_codeword_dt_multi} to hold in order to respectively guarantee the robust stability of protocols \eqref{eq:SANct} and \eqref{eq:SANdt}.

\begin{figure}[b!]
	\centering 
	\subfigure[]{\includegraphics[height=0.169\textwidth, trim={0cm -1cm 0cm 0cm},clip]{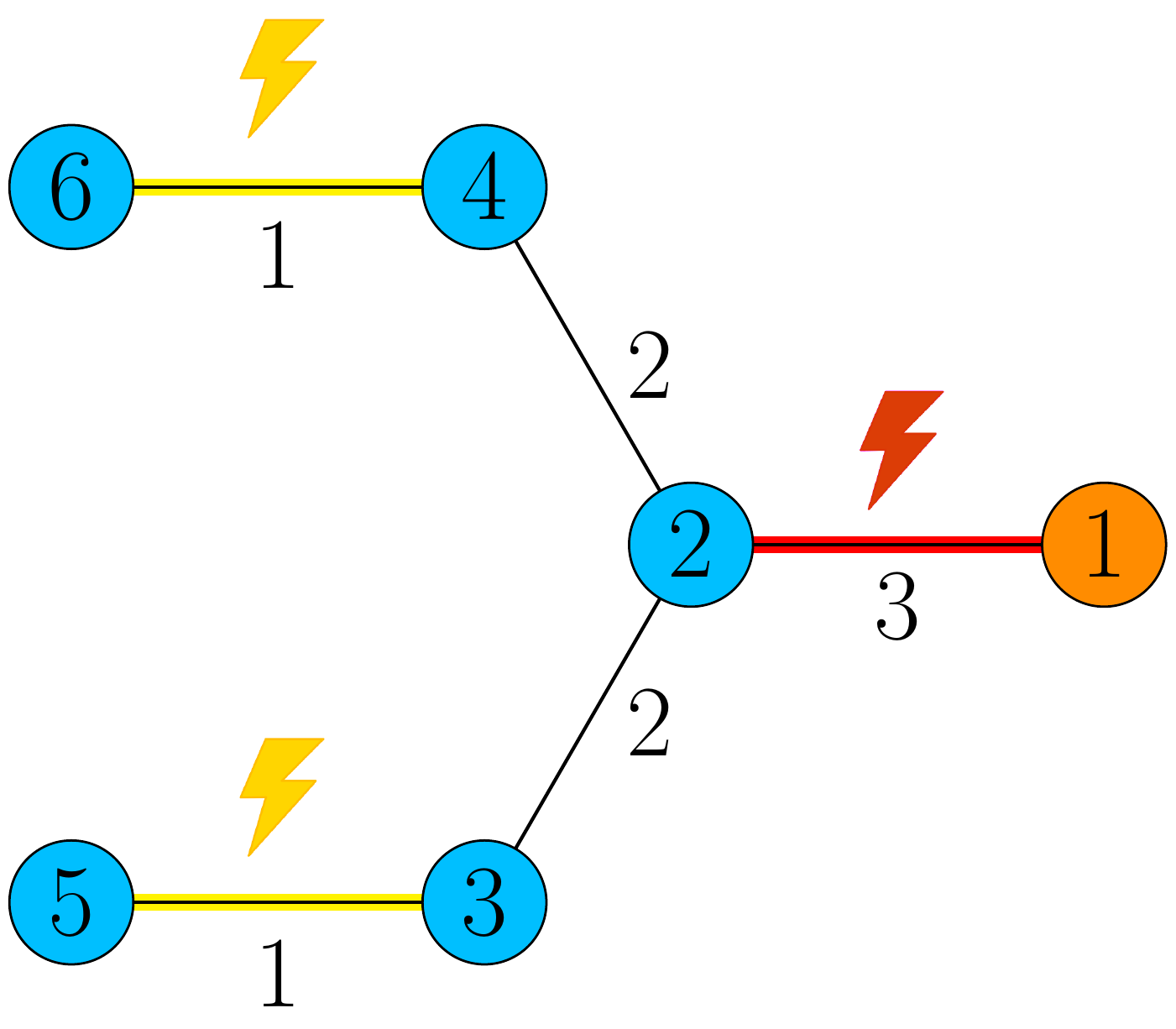}\label{fig:Graph_multi}}
	\hspace{0.0cm}
	\subfigure[]{\includegraphics[height=0.159\textwidth,trim={3cm 0.1cm 6cm 1cm},clip]{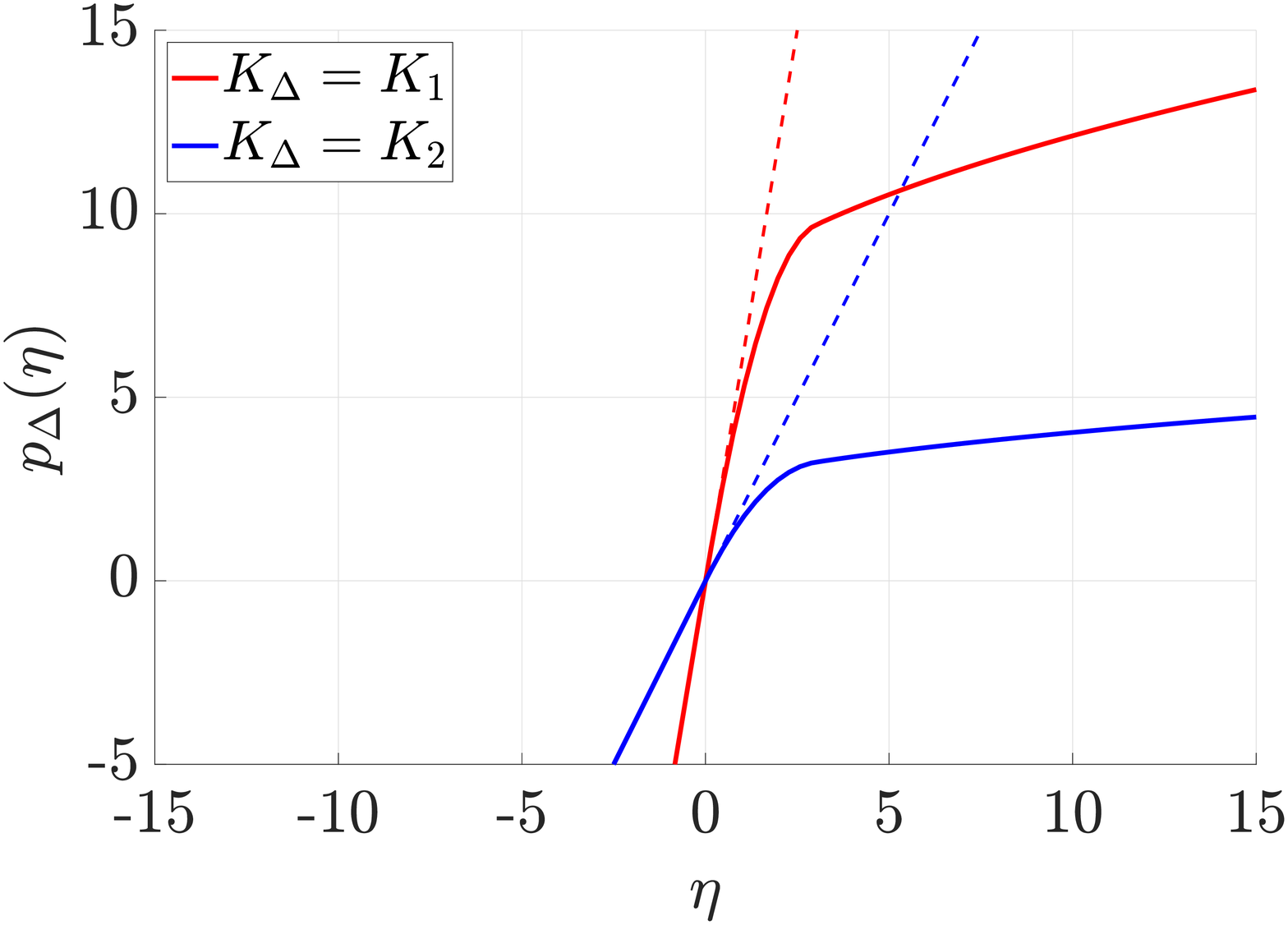}\label{fig:p}}
	\caption{(a): underlying topology of the given SAN (leaders in orange and followers in light blue); (b): Objective decoding functions considered for this case study.}
	\label{fig:sim_setup}
\end{figure}

In the following lines, we examine the SAN $\Smc = (\Gmc,\Umc)$ whose interconnections are modeled through the graph $\Gmc = (\Vmc,\Emc,\Wmc)$ illustrated in Fig. \ref{fig:Graph_multi}: the sole leader (node $1$) is marked in orange, the followers (remaining nodes) are marked in light blue, and the edge weights $\Wmc = \left\lbrace w_{k}\right\rbrace_{k=1}^{5}$ given by $w_{1}=w_{12}$, $w_{2}=w_{35}$, $w_{3}=w_{46}$, $w_{4}=w_{24}$, $w_{5}=w_{23}$ are chosen according to the same picture. Moreover, we set $D=1$, $[B]_{i\ell} =1$ for $\ell=1$ if $i=1$, and $\Umc = \left\lbrace u_{1} \right\rbrace$, with $u_{1} = -0.5$. We test the proposed security methods both in continuous and discrete time by uniformly adopting for all the edges the following objective decoding function
\begin{equation}\label{eq:dec_fun_gamma0}
	p_{\Delta}(\eta) = \begin{cases}
		K_{\Delta} \left(\frac{4}{13} \sqrt{\eta+1}+1\right), &\quad \text{if } \eta \geq 3; \\
		K_{\Delta} \left(-\frac{2}{13}\eta^{2}+\eta\right), &\quad \text{if } 0 \leq \eta < 3; \\
		K_{\Delta} \eta, & \quad  \text{if }\eta < 0;
	\end{cases}
\end{equation}
where $K_{\Delta}>0$ is a suitable Lipschitz constant for $p_{\Delta}(\eta)$, to be selected in order to ensure that $\left\|\delta^{w}\right\|_{\infty} \leq \rho_{\Delta}^{\theta} = 0.5$. More in detail, we consider two kinds of malicious attacks hitting part of $\Wmc$. The first external perturbation is delivered against edge $(1,2)$ (red strike in Fig. \ref{fig:Graph_multi}). Whereas, the second one involves not only edge $(1,2)$ but also edges $(3,5)$ and $(4,6)$ (yellow strikes in Fig. \ref{fig:Graph_multi}). Hence, under this specific adversarial setup, we denote the set of edges under attack as $\Emc_{1} = \left\lbrace (1,2) \right\rbrace$ and $\Emc_{2} = \left\lbrace (1,2), (3,5), (4,6) \right\rbrace$ for the first and second kind of perturbation, respectively.\\
Observing that $\Gmc$ in Fig. \ref{fig:Graph_multi} is an undirected tree (i.e. $\Gmc = \Tmc$) then $\Rmc_{\Emc_{\Delta}}(\Gmc)$ can be computed as in \eqref{eq:REDelta_acyclic}.
One can set $\epsilon = (\Rmc_{\Emc_{1}}^{-1}(\Gmc) + \Psi_{\Gmc})^{-1} = (2w_{1}+w_{4}+w_{5})^{-1} = 0.1$ in order to satisfy \eqref{eq:epsguar_rs_multi} for both $\Emc_{1}$, $\Emc_{2}$; thus, the Lipschitz constant $K_{\Delta}$ of $p_{\Delta}$ can be finally designed 
according to \eqref{eq:REDelta_acyclic} and \eqref{eq:sgrepatt_multi} so that it respectively takes values $K_{1} = 6$ and $K_{2} = 2$ for perturbations on edges $\Emc_{1}$ and $\Emc_{2}$. In Fig. \ref{fig:p}, the decoding function \eqref{eq:dec_fun_gamma0} is then depicted (solid curves) for such values of $K_{\Delta}$ along with the corresponding (dashed) lines having slope equal to its Lipschitz constant.

\vspace{-0.4mm}
We now discuss the robust stability of the SAN under investigation. 
In particular, we focus on the  trajectories of \eqref{eq:SANct} and \eqref{eq:SANdt} for $\Smc$ subject to the following perturbations:
$\delta_{1}^{\theta} = -0.5\rho_{\Delta}^{\theta} [1~\!\! 0~\!\! 0~\!\! 0~\!\! 0]^{\top}$, 
$\delta_{2}^{\theta} = -0.5\rho_{\Delta}^{\theta} [1~\!\! 1~\!\! 1~\!\! 0~\!\! 0]^{\top}$, 
where $\delta_{1}^{\theta}$ 
strikes the edge in $\Emc_{1}$ and $\delta_{2}^{\theta}$
strikes the three edges in $\Emc_{2}$. 
It is crucial to note that if $K_{\Delta} \!\!=\!\! K_{1}$ is selected as in the single-case attack scenario proposed in \cite{FabrisZelazo2022} adopting $p_{\Delta}$ to counter the perturbation $\delta_{2}^{\theta}$, then no stability guarantees can be given (see Figs. \ref{fig:ct_div}-\ref{fig:dt_div}). 
Indeed, despite this choice allows to mitigate the malicious effects of $\delta_{1}^{\theta}$, the additional perturbations on edges $(3,5)$ and $(4,6)$ are capable of disrupting the network state convergence towards $u_{1}$ as $t \!\rightarrow \!+\infty$ because $K_{\Delta} = K_{1}$ does not match condition \eqref{eq:sgrepatt_multi}. 
Instead, Figs. \ref{fig:ct_conv}-\ref{fig:dt_conv} show that such an issue is overtaken if $K_{\Delta} = K_{2}$ is set ensuring proper state convergence in accordance with \eqref{eq:sgrepatt_multi}.  


\begin{figure}[b!]\vspace{-.5cm}
	\vspace{-2.7mm}
	\centering
	\subfigure[Divergence (continuous time).
	]{\includegraphics[height=0.209\textwidth, trim={9cm 0.1cm 10cm 1cm},clip]{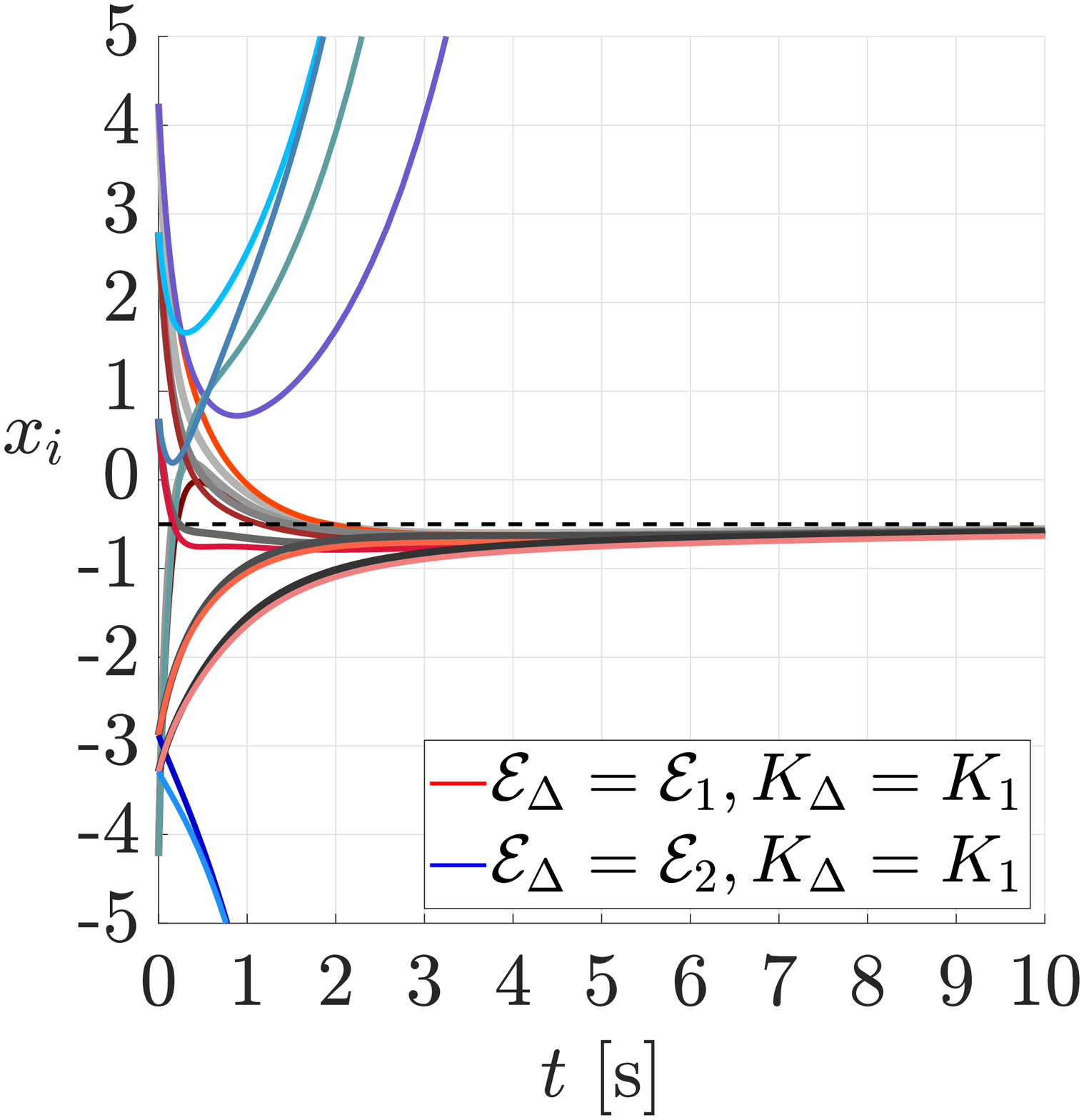}\label{fig:ct_div}}
	\hspace{0.0cm}
	\subfigure[Divergence (discrete time).
	]{\includegraphics[height=0.209\textwidth,trim={9cm 0.1cm 10cm 1cm},clip
		]{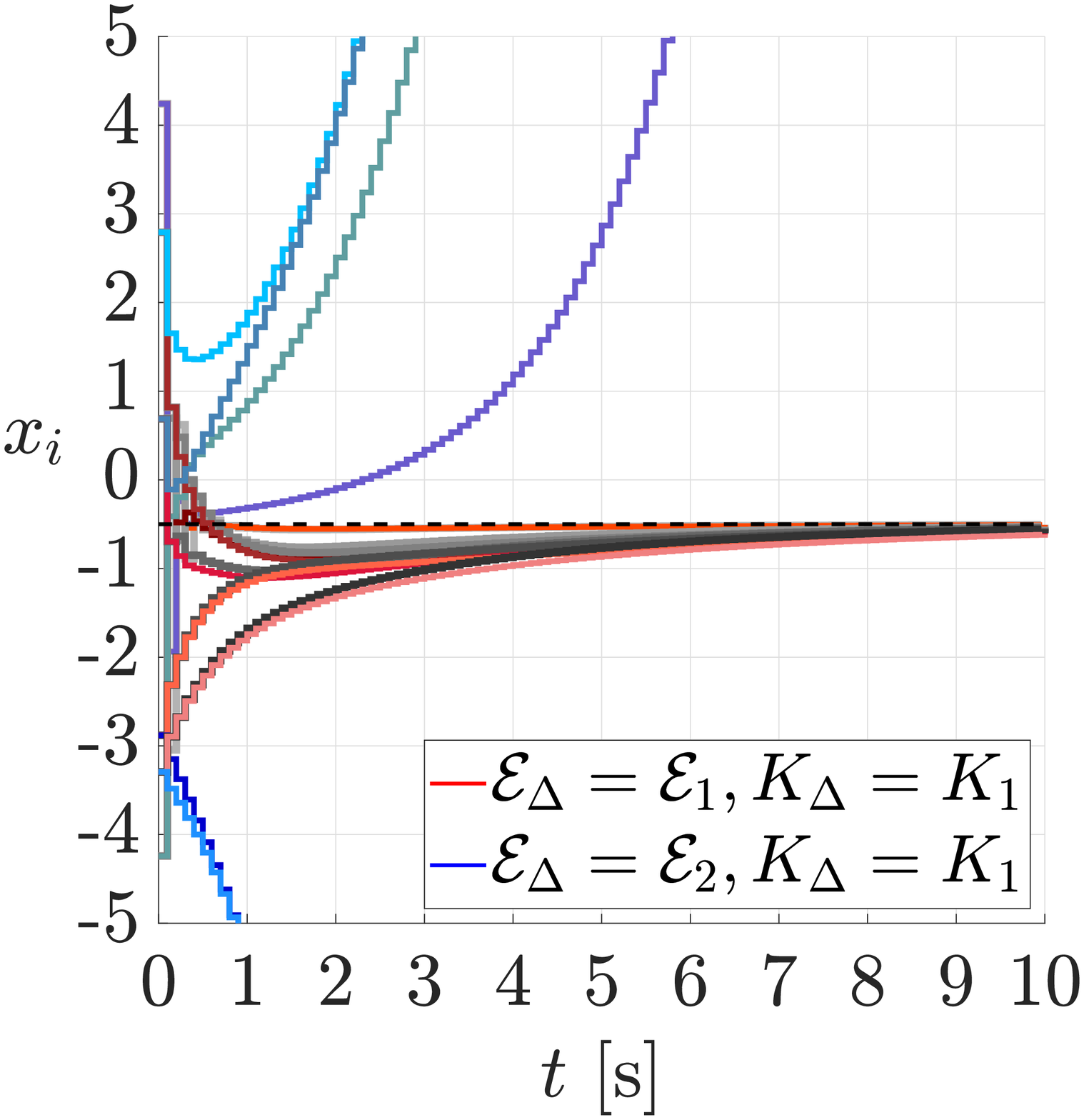}\label{fig:dt_div}}
	\\
	\vspace{-3.0mm}
	\subfigure[Convergence (continuous time).
	]{\includegraphics[height=0.219\textwidth, trim={9cm 0.1cm 10cm 1cm},clip]{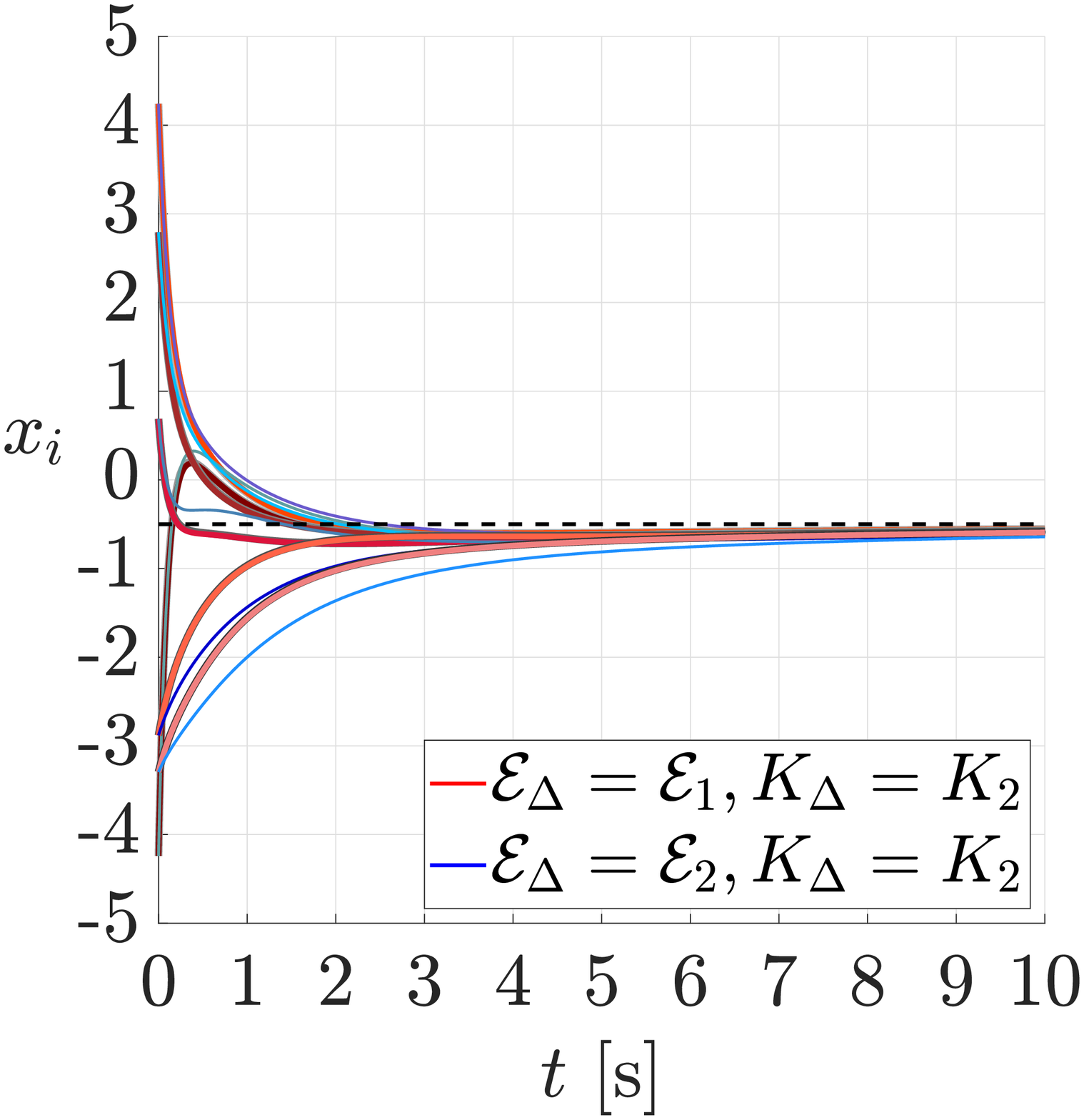}\label{fig:ct_conv}}
	\hspace{0.0cm}
	\subfigure[Convergence (discrete time).
	]{\includegraphics[height=0.219\textwidth, trim={9cm 0.1cm 10cm 1cm},clip]{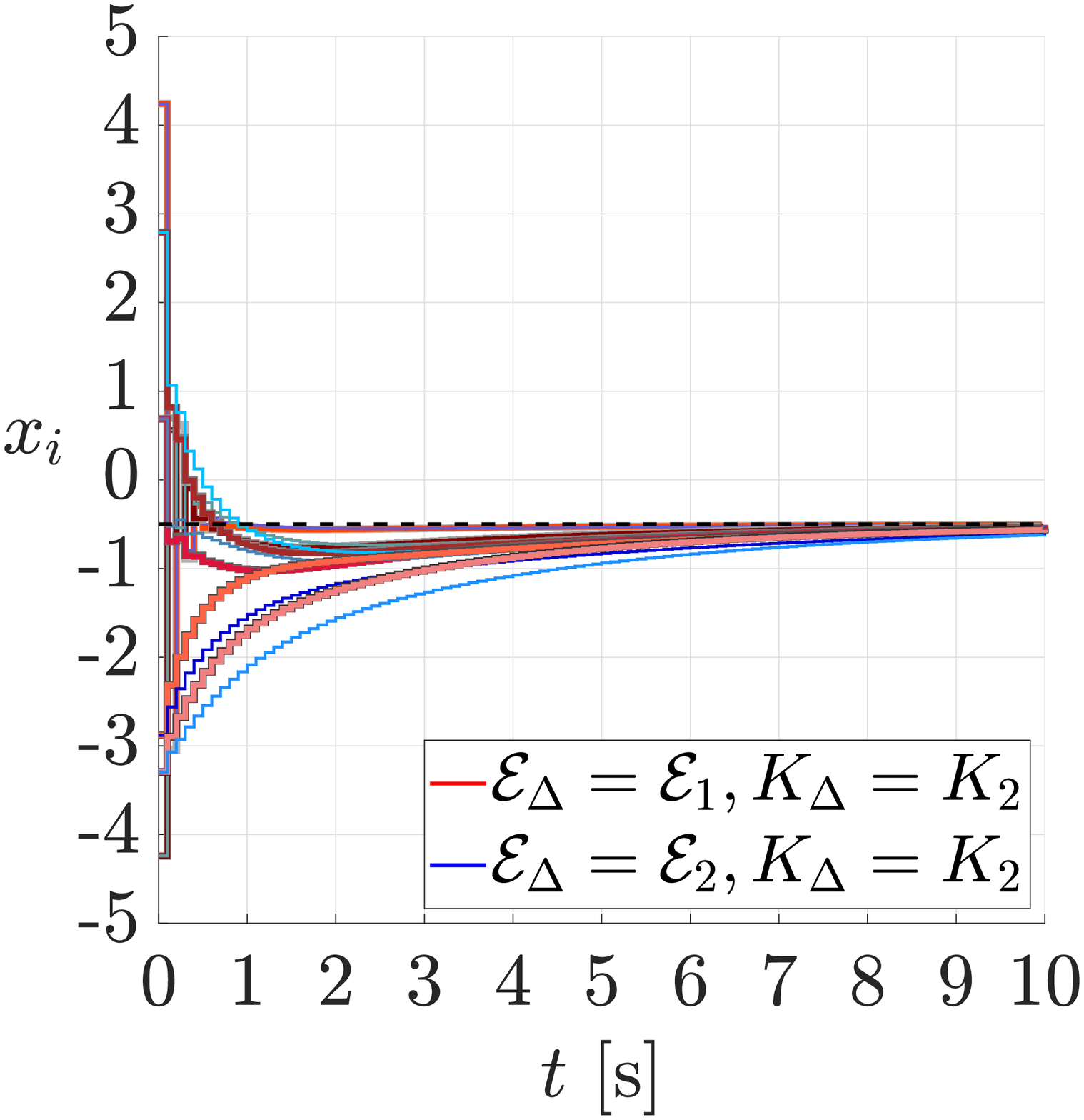}\label{fig:dt_conv}}
	\caption{
		Behaviors of the SAN trajectories as $\Emc_{\Delta}$ and $p_{\Delta}$ vary. According to $\epsilon$, each discrete time stamp is set to $0.1~s$. (a)-(b): stability guarantees for attack $\delta_{1}^{\theta}$ on $\Emc_{1}$ but not for $\delta_{2}^{\theta}$ on $\Emc_{2}$; (c)-(d): stability guarantees for the attacks $\delta_{i}^{\theta}$ on $\Emc_{i}$, $i=1,2$. 
	}
	\label{fig:matlab_sims}
\end{figure}

\vspace{-0.1mm}
In light of this, we can appreciate that the extension of the SBDC protocol towards multiple-attack scenarios requires, in general, more encryption capabilities and attention to the design phase in order to ensure robust stability compared to the single-attack case. Indeed, in this simulation, we have seen that only by considering $\Rmc_{\Emc_{\Delta}}(\Gmc)$ computed as in \eqref{eq:uncguar_multi} and lowering the value for $K_{\Delta}$ the agreement can be reached.

\section{Final remarks and future directions}\label{sec:conclusions}
The paper broadens the approach devised in \cite{FabrisZelazo2022} for secure consensus networks to a multiple attack scenario, both in the continuous and discrete time domains. Even in this framework, small-gain-theorem-based stability guarantees are yielded to this aim and, remarkably, no additional assumption is needed to provide such a generalization. In addition, the conservatism arising from a multiplicity of threats has been addressed and analyzed.
Future works will involve new developments towards other multi-agent protocols, such as nonlinear consensus and distance-based formation control.


\bibliographystyle{IEEEtran}
\bibliography{biblio}

\end{document}